\DeclarePairedDelimiter\abs{\lvert}{\rvert}
\newtheorem{theorem}{Theorem}
\newtheorem{lemma}{Lemma}
\newtheorem{chara}{Characterization}
\newtheorem{rem}{Remark}
\lstdefinestyle{r-output}{
style = r-style,
style = r-output-user,
}
\DeclareMathOperator\erf{erf}
\definecolor{lime}{HTML}{A6CE39}
\DeclareRobustCommand{\orcidicon}{%
	\begin{tikzpicture}
	\draw[lime, fill=lime] (0,0) 
	circle [radius=0.16] 
	node[white] {{\fontfamily{qag}\selectfont \tiny ID}};
	\draw[white, fill=white] (-0.0625,0.095) 
	circle [radius=0.007];
	\end{tikzpicture}
	\hspace{-2mm}
}
\xdef\csname orcid\x\endcsname{\noexpand\href{https://orcid.org/\csname orcidauthor\x\endcsname}{\noexpand\orcidicon}}
\title{Characterization-based approach for construction of goodness-of-fit test for L\'evy distribution}
\author{\v Zikica Luki\'c\orcidA{}\\
University of Belgrade, Faculty of Mathematics\\
Belgrade, Serbia\\
\texttt{zikicamaster@gmail.com}
\And
Bojana Milo\v sevi\'c\orcidB{}\\
University of Belgrade, Faculty of Mathematics \\
Belgrade, Serbia\\
\texttt{bojana.milosevic@matf.bg.ac.rs}
}
\date{ }
\begin{document}

\maketitle

\begin{abstract}
    {{The} Lévy distribution, alongside  the Normal and Cauchy distribution, is one of { the} only three stable distributions whose density can be obtained in a closed form. However, there are only a few specific goodness-of-fit tests for the Lévy  distribution. In this paper two novel classes of goodness-of-fit tests for { the} Lévy  distribution are proposed. Both tests are based on $V$-empirical Laplace transforms. New tests are scale free under the null hypothesis, which makes them suitable for testing the composite hypothesis. The finite sample and limiting properties of test statistics are obtained. In addition, a generalization of the recent Bhati--Kattumannil goodness-of-fit test to { the} Lévy  distribution is considered. For assessing the quality of novel and competitor tests, the local Bahadur efficiencies are computed, and a wide power study is conducted. Both criteria clearly demonstrate the quality of the new tests. The applicability of the novel tests is demonstrated with two real-data examples.}
    \keywords{Bahadur efficiency, Laplace transform, stable distributions, V-statistics}
    \emph{\textbf{MSC Classification 2020}}  62E10, 62G10, 62G20

\end{abstract}

\section{Introduction}\label{sec:1}

The L\'evy distribution is one of the three stable distributions whose density has a closed form \cite{Zol}, given by  
{
\begin{align}
\label{1}
f(x; \lambda, \mu)=\sqrt{\dfrac{ \lambda}{2\pi}}\frac{e^{-\dfrac{ \lambda}{2(x-\mu)}}}{(x-\mu)^{\frac{3}{2}}},\;x\geq\mu,\; \lambda>0, \; \mu\in \mathbb{R}.
\end{align}

That property makes it especially  attractive in the scientific community, and consequently, it has many applications (see, e.g. \cite{Ebeling,West, rogers2008multiple, vinaya2018effect}). 
{ Therefore, it has been of huge importance to develop methods for parameter estimation as well as appropriate goodness-of-fit (GOF) tests.}
%Therefore, ensuring { that} the data {\color {red} comes} from { the} L\'evy distribution is of considerable importance.

{ 
Maximum likelihood estimation of $\lambda$ when $\mu=0$ is covered in \cite{ali2005inference}, while  the case of both unknown parameters is addressed in \cite{achcar2018use}. However, the derivation is unclear and the numerical calculation of MLE yields different estimates.
%Moreover, the asymptotic properties of these estimators are unknown.
It is worth mentioning that there are results about the MLE for the parameters of stable distribution in the general case. In \cite{nolan2001maximum} (see also \cite{tian2016parameter}),  one can read that this estimation procedure is computationally demanding and it requires the maximum searching procedure to be carefully implemented. 
The method proposed in \cite{mcculloch1986simple} avoids numerical optimization, but asymptotic properties seem to be unknown. The method proposed in \cite{koutrouvelis1980regression} produces consistent estimates, but asymptotic properties of the estimator seem to be unknown as well.
In practice, very often, the value of the location parameter can be deduced based on the nature of the phenomena, and therefore might be assumed as a known fixed value. Therefore in what follows, we assume that $\mu=0$ and the case of unknown $\mu$  is beyond the scope of this paper.

%  For more details, we refer to 
%\cite{nolan2001maximum, tian2016parameter}. 
%The MLE for right censored data coming from the L\'evy distribution is addressed in \cite{valencia2019parametric}.

%One might opt to adapt the estimation commonly used for the wider class of stable distributions. The maximum likelihood estimation is computationally demanding, the finite sample properties are unknown and it requires the maximum searching procedure to be carefully implemented )\cite{nolan2001maximum, tian2016parameter}. The method proposed in \cite{mcculloch1986simple} avoids numerical optimization, but asymptotic properties seem to be unknown. The method proposed in \cite{koutrouvelis1980regression} produces consistent estimates, but asymptotic properties of the estimator seem to be unknown as well. The two-parameter L\'evy distribution is beyond the scope of this paper.  
%{
%Therefore, we assume $\mu=0$ for the rest of this paper. }

}
{ For the testing GOF to L\'evy distribution, one might use classical empirical distribution function (EDF)-based GOF tests, as in \cite{o1998note, lilliefors1967kolmogorov}. The usage of GOF tests for $\alpha$-stable distributions for $\alpha=0.5$ is possible as well \cite{pitera2022goodness}.} { However, as far as we know, the only specific GOF to L\'evy distribution, is proposed  in \cite{bhati2020jackknife}.}
%There aren't many specific goodness-of-tests (GOF) to L\'evy distribution known so far{, except \cite{bhati2020jackknife}.

One of the primary goals of this paper is to fill in the existing gap in the literature with proposal of two new scale-free classes of GOF tests. New tests belong to the group  of the equidistribution-characterization-based tests which recently, due to their nice properties, became very attractive. See, e.g., \cite{PUBL, milovsevic2016asymptotic,SORT}, for GOF tests for exponentiality, \cite{obradovic2015goodness, allison2021distribution} for GOF tests  for Pareto distribution, \cite{NIKITINLOGISTICKA} for GOF tests  for logistic distributions, etc.   

%ebner2023test

The common approach to assess the quality of tests is to find their power against different alternatives. This approach, for the GOF for the L\'evy distribution, is used in \cite{bhati2020jackknife}. Another approach, especially useful for a large sample comparison, is the notion of asymptotic efficiency.  Bahadur asymptotic efficiency or its approximation is shown to be an attractive option when dealing with tests with non-normal limiting properties (see, e.g., {\cite{nikitinKnjiga, nasArxiv, meintanis2022bahadur}}) { and is recently used as one of the main criteria for the evaluation of novel proposals (see, e.g., \cite{ragozin2021new,ebner2023test})}. Therefore,  we apply it for {the} novel and {the Bhati--Kattumanil tests}. We emphasize that this criterion has not been used before for accessing the quality of any GOF test to { the} L\'evy distribution.
}

This paper is organized in the following manner. In Section \ref{sec:2} we revisit {the test proposed in } \cite{bhati2020jackknife}.  Section \ref{sec:3} is dedicated to the proposal of new class{es} of goodness-of-fit tests and {their} limiting properties. The asymptotic efficiency of considered tests is derived in 
Section \ref{sec:4}, while the results of { the} empirical power study are presented in Section \ref{sec:5}.  An illustration of the usage of novel tests on real data is given in Section \ref{sec:6}. {All proofs are {included} in Appendix A. Appendix B contains  the generalization of Bhati--Kattumanil test, while used real-data sets and their visual representations are given in Appendix C.}
{ Appendix D contains the results related to the case of the median-based estimator, while Appendix E contains empirical percentiles of the null distribution of some representatives of the proposed classes of GOF test statistics. }

{ We will denote with $F(x;\lambda)$ the distribution function of {the} L\'evy distribution with scale parameter $\lambda$. { The} density of {the} L\'evy distribution with scale parameter $\lambda$ will be denoted with $f(x;\lambda)$.{The} standard L\'evy distribution is defined as {the} L\'evy distribution with scale parameter $\lambda=1$. 
}
In the sequel, the distribution function (df) and density of standard L\'evy distribution { will} be denoted with $F_0(x)$ and $f_0(x)$ for the reasons of brevity. 
\section{On Bhati--Kattumanil test statistic } \label{sec:2}

In \cite{AhsNev1}, Ahsanullah and Nevzorov proved the following  characterization of L\'evy distribution.

\begin{chara}  \label{karakterizacija}
Suppose that $X, Y$ and $Z$ are independent and identically distributed random variables with density $f$ defined on $(0, \infty)$. Then
$$Z\text{ and }\dfrac{aX + bY}{\big(\sqrt{a}+\sqrt{b}\big)^2}\text{, }0 < a, b < \infty$$
are identically distributed if and only if $f$ is a density of L\'evy distribution with arbitrary scale parameter $\lambda$.
\end{chara}
{The characterization is based upon the stability of the L\'evy distribution \cite{feller2008introduction}.} In  a  view of Characterization \ref{karakterizacija}, for $a=b=1$, Bhati and Kattumannil in  {\cite{bhati2020jackknife} proposed the test statistic
\begin{equation*}
    T_n^*=\int_{{\mathbb{R}^+}}\Big(\frac{1}{\binom{n}{2}}\sum\limits_{j<i}I\Big\{\frac{X_i+X_j}{4}\leq t\Big\}-F_n(t)\Big)dF_n(t),
\end{equation*}
which is an integrated difference between U-empirical df of $X$ and  $\frac{X+Y}{4}$.  This statistic is a hybrid U-statistic, asymptotically equivalent to the non-degenerate U-statistic
\begin{equation*}
    T_n=\frac{1}{\binom{n}{3}}\sum\limits_{k<j<i}I\Big\{\frac{X_i+X_j}{4}\leq X_k\Big\}-\frac{1}{2}.
\end{equation*}
} They showed that, under $H_0$, 
the distribution $\sqrt{n}T_n$  converged to a centred Gaussian distribution with variance equal to 
\begin{align}\label{sigmaG}
    \sigma^2_T=Var\Big(\int\limits_0^\infty2\big(1-F_0\Big(\frac{X+y}{4}\Big)\Big)f_0(y)dy+F_0(X)\big).
\end{align}

Since they were not being able to calculate asymptotic variance, they proceeded with considering empirical jackknife and jackknife-adjusted versions of $T_n$. The fact  that the test statistic is scale free under { the} null hypothesis, as it will be shown later, significantly simplifies testing procedures and the usage of the aforementioned resampling procedures might be skipped.

{
{Our} numerical calculation yields $\sigma^2_T= 0.0235051$. 

In the rest of this paper, we { will} denote the statistic $T_n$ with $\bar I^{[1, 1]}_n$, and its generalization for arbitrary $a,b>0$, { presented in Appendix B}, with $\bar I^{[a, b]}_n$. Since the properties of $\bar I^{[a, b]}_n$ aren't significantly different than those of the $a=b=1$, we present them in Appendix B.
}
\section{New classes of goodness-of-fit tests}\label{sec:3}
Equality in distribution of two random variables can also be expressed  through the equality of their Laplace transforms. 
The test statistic might be constructed as a function of the difference among corresponding U- or V-empirical Laplace transforms. This approach has been used for the first time in \cite{milovsevic2016new}, and was further explored in \cite{SORT, nasArxiv}.
{Taking into account the discussions  from mentioned papers, in a view of Characterization }  \ref{karakterizacija}, we propose new classes of test statistics $\mathcal{J}=\{J_{n,a},a>0\}$ and $\mathcal{R}=\{R_{n,a},a>0\}$, where
\begin{align}\label{statJn}
J_{n,a}&=\sup\limits_{t>0}\Big \vert \Big (\frac{1}{n^2}\sum\limits_{i, j} e^{-\frac{t(Y_{i}+Y_{j})}{4}}-\frac{1}{n}\sum\limits_{i} e^{-tY_{i}}\Big )e^{-at} t^{\frac{3}{2}}\Big \vert \\
%\label{statRnEXP}
&\nonumber=\sup\limits_{t\in [0, 1]}\Big \vert \Big (\frac{1}{n^2}\sum\limits_{i, j} t^{\frac{Y_{i}+Y_{j}}{4}}-\frac{1}{n}\sum\limits_{i} t^{Y_{i}}\Big )t^{a}\big (-\log t\big)^{\frac{3}{2}}\Big \vert ,\\
\label{statRn}
R_{n, a}& =\int_{\mathbb{R}^+}\Big (\frac{1}{n}\sum\limits_{i} e^{-tY_{i}}-\frac{1}{n^2}\sum\limits_{i, j} e^{-\frac{t(Y_{i}+Y_{j})}{4}}\Big )e^{-at} t^{\frac{3}{2}}dt\\
%\label{statRnEXP}
&\nonumber=\frac{3\sqrt{\pi}}{4n^2}\sum_{i, j}\Bigg(\frac{1}{\big(a+\frac{Y_i+Y_j}{4}\big)^\frac{5}{2}}-\frac{1}{2\big(a+Y_i\big)^\frac{5}{2}}-\frac{1}{2\big(a+Y_j\big)^\frac{5}{2}}\Bigg), 
\end{align}
and  $Y_k=\frac{X_k}{\hat{\lambda}}$  and $\hat{\lambda}$ is MLE of $\lambda$, given by
\begin{align}\label{lambdaMLE}\hat{\lambda}=\frac{n}{\sum\limits_{k=1}^n \frac{1}{X_k}}.\end{align}
{
{Note that} one might { also} opt for the median-based estimator (MBE):
\begin{equation*}
     \hat{\lambda}_{MBE} := 2(\erf^{-1}(1/2))^2 \widetilde{x},
 \end{equation*}
where  $ \widetilde{x}$ is the sample median, and where
\begin{equation*}
   \erf(x)=\frac{2}{\sqrt{\pi}}\int\limits_x^\infty e^{-t^2} dt
\end{equation*}
denotes the complementary error function. { The power study for this approach is presented in Appendix D. We will assume $\hat{\lambda}$ is the MLE in the rest of this paper.}}

Here the function {$e^{-at}t^{3/2}$} plays a role of weight function. 
%{ }{The specific form of the weight function was chosen to ensure the convergence of both novel test statistics}.
Therefore, test statistics might be modified with the selection of another weight function.{ Since, under $H_0$, the values of test statistics should be small, we take large values of  $J_{n,a}$  and $\vert R_{n,a}\vert $   to be significant.} 

In the next  two theorems, we present limiting distributions of $\sqrt{n}J_{n,a}$ and $\sqrt{n}R_{n,a}$ under $H_0$.

\begin{theorem}\label{asimptotikaJ}
Let $a\geq 1$ and $X_1, X_2, \dots, X_n$ be i.i.d random variables  distributed according to the L\'evy law with scale parameter $\lambda$. Then the following holds:
\begin{equation*}
    \sqrt{n} J_{n, a}\overset{{D}}{\to} \sup\limits_{t\in [0, 1]}\mid \xi (t)\mid,
\end{equation*}
where $\xi(t)$ is a centred Gaussian random process, having the following covariance function:

\begin{align*}
    K(s,t)&=%\int\limits_\mathbb{R^+}\int\limits_\mathbb{R^+}\int\limits_\mathbb{R^+}\frac{\big(-2 s^{\frac{x+z}{4}}+s^z+s^x\big) \big(-2 t^{\frac{x+y}{4}}+t^y+t^x\big) e^{\frac{1}{2} \big(-\frac{1}{x}-\frac{y+z}{y z}\big)}}{8 \sqrt{2} \pi ^{3/2} x^{3/2} y^{3/2} z^{3/2}}dxdydz
  s^a t^a (-\log (s))^{3/2} (-\log (t))^{3/2} \Big(-e^{-\sqrt{2} \big(\sqrt{-\log (s)}+\sqrt{(-\log (t))}\big)}-2 e^{- \sqrt{-2(\log(s)-\frac14\log(t))}+\sqrt{-\frac{\log (t)}{2}}}\\&-2 e^{- \sqrt{2(-\log (t)-{\frac14}\log(s))}+\sqrt{-\frac{\log (s)}{2}}}+4 e^{-\frac{\sqrt{-\log (s t)}+\sqrt{-\log (s)}+\sqrt{-\log (t)}}{\sqrt{2}}}+e^{ \sqrt{-2\log (s t)}}\Big).
\end{align*}
%\begin{align*}
 %   K(s,t)&=\frac{s^a t^a}{4} (-\log (s))^{3/2} (-\log (t))^{3/2} \Big(-e^{-\sqrt{2} \big(\sqrt{-\log (s)}+\sqrt{(-\log (t))}\big)}\\
  %  &-2 e^{- \sqrt{-2(\log(s)-\frac14\log(t))}+\sqrt{-\frac{\log (t)}{2}}}
  %-2 e^{- \sqrt{2(-\log (t)-{\frac14}\log(s))}\\ 
  %  &+\sqrt{-\frac{\log (s)}{2}}}+4 e^{-\frac{\sqrt{-\log (s t)}+\sqrt{-\log (s)}+\sqrt{-\log (t)}}{\sqrt{2}}}+e^{ \sqrt{-2\log (s t)}}\Big).
%\end{align*}
\end{theorem}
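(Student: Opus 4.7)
The plan is to reduce to $\lambda=1$ via the scale-invariance of $J_{n,a}$, linearize the $V$-statistic through a Hoeffding-type expansion, and then combine a finite-dimensional CLT with tightness of the weighted process in $C[0,1]$.

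First I would exploit scale invariance: if $X_i=\lambda Z_i$ with $Z_i$ i.i.d.\ standard L\'evy, then $\hat\lambda(X)=\lambda\hat\lambda(Z)$, so $Y_i=Z_i/\hat\lambda(Z)$ and the distribution of $J_{n,a}$ is free of $\lambda$. After the change of variable $u=t/\hat\lambda$, the statistic reads
\begin{equation*}
\sqrt{n}\,J_{n,a}=\hat\lambda^{3/2}\sup_{u>0}\bigl|\sqrt{n}\,V_n(u)\bigr|\,e^{-a\hat\lambda u}u^{3/2},\qquad V_n(u):=\bar g_n(u)^2-\bar h_n(u),
\end{equation*}
where $\bar g_n(u)=\frac1n\sum_i e^{-uX_i/4}$ and $\bar h_n(u)=\frac1n\sum_i e^{-uX_i}$. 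Setting $\mu_g(u)=e^{-\sqrt{u/2}}$ and $\mu_h(u)=e^{-\sqrt{2u}}$, Characterization~\ref{karakterizacija} translated to Laplace transforms gives $\mu_g(u)^2=\mu_h(u)$, and a standard Hoeffding-type expansion yields
\begin{equation*}
\sqrt{n}\,V_n(u)=\frac{1}{\sqrt{n}}\sum_{i=1}^n\psi(X_i,u)+o_p(1),\qquad \psi(x,u):=2e^{-\sqrt{u/2}}e^{-ux/4}-e^{-ux}-e^{-\sqrt{2u}},
\end{equation*}
with $E[\psi(X,u)]=0$.

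Finite-dimensional convergence of $\sqrt{n}V_n(\cdot)$ at any $u_1,\dots,u_k$ follows from the multivariate CLT applied to the i.i.d.\ vectors $(\psi(X_i,u_j))_{j\le k}$, producing a centred Gaussian limit with covariance $K_W(s,t)=E[\psi(X,s)\psi(X,t)]$. Expanding this product produces nine Laplace-transform evaluations of the form $E[e^{-\rho X}]=e^{-\sqrt{2\rho}}$ which collapse to five distinct exponentials; multiplying by the weights $e^{-au}u^{3/2}$ in both arguments and applying the reparameterisation $u=-\log t$ matches these with the five summands of $K(s,t)$ in the statement. The plug-in is handled by $\hat\lambda\overset{P}{\to}1$, which follows from $\hat\lambda=n/\sum_i 1/X_i$ with $E[1/X]=1$ and $\mathrm{Var}(1/X)=2$ under $F_0$; a uniform bound of the form $|e^{-a\hat\lambda u}-e^{-au}|\le C|\hat\lambda-1|\,u\,e^{-au/2}$, valid once $|\hat\lambda-1|<1/2$, shows that the substitution $a\mapsto a\hat\lambda$ does not alter the limit.

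The main obstacle is tightness of the weighted process in $C[0,1]$. The L\'evy Laplace transform is only H\"older-$1/2$ at $u=0$, so the covariance is non-smooth at the endpoint $t=1$ after reparameterisation. The weight $u^{3/2}$ compensates exactly: since $\mathrm{Var}(e^{-uX}-\mu_h(u))=e^{-2\sqrt{u}}-e^{-2\sqrt{2u}}=O(\sqrt{u})$ as $u\to 0^+$, the weighted variance is $O(u^{7/2})$ there, while for $u\to\infty$ the assumption $a\ge 1$ yields exponential decay. An $L^2$ modulus-of-continuity estimate (a Kolmogorov-Chentsov-type bound), combined with finite-dimensional convergence, delivers weak convergence of the weighted process in $C[0,1]$; the continuous mapping theorem applied to $f\mapsto\sup|f|$ then completes the proof.
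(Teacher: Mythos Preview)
Your overall architecture---scale invariance, Hoeffding projection to an i.i.d.\ sum, finite-dimensional CLT plus tightness in $C[0,1]$, continuous mapping---matches the paper's. The one genuine divergence is how you neutralise $\hat\lambda$. You substitute $u=t/\hat\lambda$ so that $\hat\lambda$ survives only in the prefactor $\hat\lambda^{3/2}$ and in the weight $e^{-a\hat\lambda u}$, and then kill both via $\hat\lambda\overset{P}{\to}1$ together with the uniform bound $|e^{-a\hat\lambda u}-e^{-au}|\le C|\hat\lambda-1|\,u\,e^{-au/2}$. The paper instead keeps the original parameterisation and Taylor-expands $V_n(t;a,\hat\lambda)$ in $\hat\lambda$ around $1$: the zeroth-order term is the clean V-statistic $V_n(t;a,1)$, the first-order term $\sqrt{n}(\hat\lambda-1)\,\partial_\nu V_n|_{\nu=1}$ is shown to vanish uniformly via a dedicated monotonicity lemma, and the second-order remainder is controlled through explicit pointwise bounds on $\partial_\nu^2\Psi$ combined with $\sqrt{n}(\hat\lambda-1)^2=o_P(1)$. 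Your change of variable is conceptually cleaner and sidesteps the derivative bookkeeping; the paper's route, in exchange, makes the $\sqrt{n}$-rate of $\hat\lambda$ fully explicit and proves tightness directly for the \emph{full} V-statistic $V_n(t;a,1)$ via a case-by-case second-moment computation over the index configurations and the mean-value theorem on the projection $\psi$, yielding $E\bigl[(\sqrt{n}V_n(t+h;a)-\sqrt{n}V_n(t;a))^2\bigr]\le Ch^2$ and then invoking Billingsley's Theorem~12.3. In your route you still owe the (easy, but unstated) fact that the quadratic Hoeffding remainder $\sqrt{n}(\bar g_n-\mu_g)^2$ is $o_p(1)$ uniformly in $u$ after weighting; once that is recorded, both arguments close.
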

{ Empirical 95th percentiles of $\sqrt{n}J_{n, a}$, presented in Table \ref{critvalJ05} and Table \ref{critvalJ5} in Appendix E are in concordance with Theorem \ref{1}.}
\bigskip

\begin{rem}

Let $X_1,X_2...,X_n$ be i.i.d. random variables distributed as $X\in F_A$ and let $F_A$ be a fixed alternative distribution such that $E\big(\frac{1}{X}\big)^2<\infty$. Denote with $\zeta=\frac{1}{E\big(\frac{1}{X}\big)}$. The law of large numbers along with the continuous mapping theorem gives us $\hat{\lambda}=\frac{n}{\sum\limits_{i=1}^n \frac{1}{X_i}}\xrightarrow[n\to\infty]{P}\zeta$.
Next, applying the law of large numbers for U- and V-statistics with estimated parameters \cite{iverson1989effects}, and similar arguments as in the proof of Lemma \ref{lemma1} we have that
$J_n\overset{P}{\to}\sup_{t\in[0,1]}|(E(t^{\frac{X_1}{\zeta}})-E(t^{\frac{X_1+X_2}{4\zeta}}))t^a(-\log t)^{\frac{3}{2}}|$ which is equal to 0 if and only if $\frac{X_1}{\zeta}$ and $\frac{X_1+X_2}{4\zeta}$ are equally distributed, i.e. iff the null hypothesis holds. From this, consistency of $J_{n,a}$ follows.

%Our new test $J_{n, a}$ is consistent for every alternative $F_A$ whose inverse has a finite second moment, e.g. if we denote with $M_A(T)$ the power of test statistic $T$ under the fixed alternative $F_A\in\Theta_1$, then $M_A(J_{n, a})\xrightarrow[]{n\to\infty}1$ for every $a\geq 1$. 
\end{rem}
\begin{theorem}\label{asimptotikaR}
Let $a\geq 1$ and $X_1, X_2, \dots, X_n$ be i.i.d random variables  distributed according to the L\'evy law with scale parameter $\lambda$. Then, for every $a>0$,
the asymptotic distribution of $\sqrt{n}R_{n, a}$ as $n\to\infty$ is normal $\mathcal{N}(0, \sigma^2_R(a))$ where $\sigma^2_R(a)= 4E\zeta(X;  a)^2$ and  $\zeta(x;a)$ is defined in \eqref{zeta}.
\end{theorem}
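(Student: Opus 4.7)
The plan is to view $R_{n,a}$ as a symmetric V-statistic of degree $2$ evaluated at the standardized data $Y_i=X_i/\hat{\lambda}$, and to derive its limit law through the Hoeffding projection combined with a delta-method correction for the plug-in MLE, using the framework for V-statistics with estimated parameters of \cite{iverson1989effects}. First I would invoke scale equivariance of $\hat{\lambda}$: since $Y_i=X_i/\hat{\lambda}$ is scale invariant, the law of $R_{n,a}$ under $H_0$ does not depend on $\lambda$, so I may assume $\lambda=1$ throughout.

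Next I would write $R_{n,a}=n^{-2}\sum_{i,j}h(Y_i,Y_j;a)$ with the symmetric bounded kernel
$$h(x,y;a)=\frac{3\sqrt{\pi}}{4}\Bigl((a+\tfrac{x+y}{4})^{-5/2}-\tfrac12(a+x)^{-5/2}-\tfrac12(a+y)^{-5/2}\Bigr),$$
and verify $E[h(X_1,X_2;a)]=0$ under $H_0$. The latter follows from Fubini, using $\int_0^\infty e^{-tu}e^{-at}t^{3/2}\,dt=\tfrac{3\sqrt{\pi}}{4}(a+u)^{-5/2}$, together with Characterization \ref{karakterizacija} for $a=b=1$, which forces $E[e^{-t(X_1+X_2)/4}]=E[e^{-tX_1}]$ under the standard L\'evy law.

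To pass from the ``oracle'' V-statistic $V_n:=n^{-2}\sum_{i,j}h(X_i,X_j;a)$ to $R_{n,a}$ itself, I would Taylor-expand $h(X_i/\hat{\lambda},X_j/\hat{\lambda};a)$ about $\hat{\lambda}=1$. Combined with the asymptotic linearisation $\sqrt{n}(\hat{\lambda}-1)=n^{-1/2}\sum_k\psi(X_k)+o_p(1)$ obtained by applying the delta method to \eqref{lambdaMLE} (for which the finite moments $E[X^{-1}]=1$ and $E[X^{-2}]=3$ suffice), this yields $\sqrt{n}R_{n,a}=\sqrt{n}V_n+c(a)\sqrt{n}(\hat{\lambda}-1)+o_p(1)$ for a deterministic constant $c(a)$. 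Applying the Hoeffding decomposition to $V_n$ and merging its first projection with the MLE correction produces an asymptotically linear representation $\sqrt{n}R_{n,a}=2n^{-1/2}\sum_i\zeta(X_i;a)+o_p(1)$, where $\zeta(x;a)$ is the influence function of \eqref{zeta}; the Lindeberg CLT then delivers $\sqrt{n}R_{n,a}\overset{D}{\to}\mathcal{N}(0,4E\zeta(X;a)^2)$.

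The main obstacle will be justifying the Taylor expansion with a remainder that is $o_p(n^{-1/2})$, controlling the contribution of the small-$X$ region where the L\'evy density is only moderately bounded, and then checking that the plug-in correction merges cleanly with the V-statistic projection into the single influence function $\zeta(x;a)$ of the theorem. Non-degeneracy, i.e.\ $E\zeta(X;a)^2>0$, should be verified separately by inspection of the integral defining $\zeta$, since otherwise the rate $\sqrt{n}$ would need to be revisited.
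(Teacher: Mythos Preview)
Your approach is essentially the paper's: write $R_{n,a}$ as a V-statistic with estimated parameter, Taylor-expand in $\hat\lambda$ about $1$, control the remainder, and apply Hoeffding's theorem to the oracle V-statistic. The paper's proof is exactly this argument in compressed form, referring back to the machinery (the expansion \eqref{razbijanje} and the remainder bound) already set up for Theorem~\ref{asimptotikaJ}.

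One point needs sharpening, however. Your constant $c(a)$ is in fact zero, and recognising this is what makes the stated $\zeta$ of \eqref{zeta} the correct influence function. By the law of large numbers for V-statistics, $\partial_\nu V_n^R(a,\nu)\big|_{\nu=1}\overset{P}{\to} E\bigl[\partial_\nu Z(X_1,X_2;a,\nu)\big|_{\nu=1}\bigr]$, and this expectation vanishes because under $H_0$ the variables $X_1$ and $(X_1+X_2)/4$ are identically distributed (Characterization~\ref{karakterizacija}): the $\nu$-derivative of each term in the kernel produces matching expectations that cancel, exactly as in Lemma~\ref{lemma1} for the $J$-statistic. Hence the first-order correction $\sqrt{n}(\hat\lambda-1)\,\partial_\nu V_n^R\big|_{\nu=1}$ is $o_p(1)$ and there is no ``merging'' to perform; $\zeta(x;a)$ in \eqref{zeta} is simply $E[Z(x,X_2;a,1)]$, the first projection of the oracle kernel, with no additional MLE contribution. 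If you proceed with a generic nonzero $c(a)$ and merge it with the projection, you will not recover the $\zeta$ of \eqref{zeta}, so this cancellation must be verified explicitly.
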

The values of $\sigma^2_{R}(a)$ cannot be calculated analytically. However, it is possible to calculate them numerically. Some values of $\sigma^2_{R}(a)$ are presented in Table \ref{tab: dispR}. Therefore, the testing procedure can also be done using {the} standardized test statistic
\begin{align*}
     \widetilde{R}_{n,a}=\sqrt{n}\frac{ R_{n,a}}{\sigma_R(a)}
\end{align*}
which, for large samples, under $H_0$, can be approximated with standard normal distribution.

\begin{table}[htbp]
\caption{Values of $\sigma^2_R(a)$ for different values of $a$.}
\label{tab: dispR}
\centering
\begin{tabular}{@{}llllll@{}}
\toprule
$\sigma^2_R(0.2)$ & $\sigma^2_R(0.5)$ & $\sigma^2_R(1)$ & $\sigma^2_R(2)$ & $\sigma^2_R(5)$ \\ \midrule
4.58804 & 0.2672024 & 0.02068868 & 0.001194688 & $1.925016\cdot 10^{-5}$ \\ \bottomrule
\end{tabular}
\end{table}
{ Empirical 95th percentiles of $\sqrt{n}|R_{n, a}|$, presented in Table \ref{critvalL.5} and  Table \ref{critval5}  in Appendix E, are in concordance with Theorem \ref{asimptotikaR}.}

\section{Asymptotic efficiency}\label{sec:4}

In recent times, the Bahadur efficiency has become a very popular tool for  stochastic comparison of test performance in {the} large sample case.   
In this section we make a brief review of Bahadur theory. For { more} details, we refer to \cite{nikitinKnjiga}.

Let $\mathcal{G}=\{g(x;\theta),\;\theta>0\}$ be a family of alternatives
{density} functions, such that $g(x;0)$ has the L\'evy distribution with arbitrary scale parameter,  and { $\int_{
\mathbb{R}^+
}\frac{1}{x^2}g(x;\theta)<\infty$ for $\theta$ in the neighbourhood of 0, and  some additional regularity
conditions for U-statistics with non-degenerate kernels hold \cite{nikitinMetron, meintanis2022bahadur}. Let also $\{T_n\}$ and $\{V_n\}$ be two sequences of test statistic that we want to compare.

Then for any alternative distribution from $\mathcal{G}$ the relative Bahadur efficiency of the $\{T_n\}$ with respect to $\{V_n\}$ can be expressed as 
\begin{align*}
    e_{(T,V)}(\theta)=\frac{c_T(\theta)}{c_V(\theta)},
\end{align*}
where $c_{T}(\theta)$  and $c_V(\theta)$ are  the Bahadur exact slopes, functions proportional
to the exponential rate of decrease of each test size when the sample size
increases.  It is usually assumed that $\theta$ belongs to the neighbourhood of 0, and in such cases, we refer to {the} local relative Bahadur efficiency of considered sequences of test statistics.

It is well known that for the Bahadur slope function Bahadur--Ragavacharri inequality holds \cite{raghavachari1970theorem}, that is
\begin{align*}
    c_T(\theta)\leq 2K(\theta),
\end{align*}
where
$K(\theta)$ is the minimal Kullback--Leibler distance
from the alternative to the class of null hypotheses, i.e. in the case of our null hypothesis
\begin{align*}
    K(\theta)=\inf_{\lambda>0}K(\theta; \lambda)=\inf_{\lambda>0}\int_{\mathbb{R}^+}\log\Big(\frac{g(x;\theta)}{f(x; \lambda)}\Big)g(x;\theta)dx.
\end{align*}
This justifies the definition of the local absolute Bahadur efficiency by
\begin{align}\label{effT}
    eff(T)=\lim_{\theta\to 0}\frac{c_T(\theta)}{2K(\theta)}.
\end{align}

If   the
sequence $\{T_{n}\}$ of test statistics under the alternative converges
in probability to some finite function $b(\theta)>0$ and 
the  limit 
\[\label{ldf}
\lim_{n\leftarrow\infty}n^{-1}\log P_{H_{0}}(T_{n}\geq t)=-f_{LD}(t)
\]
exists for any $t$ in an open interval $I$, on which $f_{LD}$ is continuous
and $\{b(\theta),\theta>0\}\subset I$ then   the Bahadur exact slope
is  equal to
\begin{equation}\label{slope}
c_{T}(\theta)=2f_{LD}(b(\theta)).
\end{equation}
However, in many cases, the calculation of the large deviation function, and consequently the Bahadur slope, turns out to be almost an insurmountable obstacle.  

If the function \eqref{ldf} cannot be calculated for $\theta$ approaching zero, instead of calculating Bahadur slope we could calculate approximate Bahadur slope $c_T^*(\theta)$ { which usually} locally coincides with the exact one. For { the} calculation of the approximate slope, we do not need the tail behaviour of d.f. of statistics $T_n$ but the tail behaviour of its limiting distribution, which is often easier to calculate.
In particular, if the limiting distribution function of $T_n$, under $H_0$, is $F_T$, whose tail behaviour is given by $\log (1-F_T(t)) =-\frac{a^*_T t^2}{2}(1+o(1)),$ where $a_T$ is the positive real number and $o(1)\to 0$ when $t\to\infty$, and  the limit in
probability of $\frac{T_n}{\sqrt{n}}$ is $b^*_T(\theta)>0$, then  $$c_T^*(\theta)=a_T^*(b^*_T(\theta))^2.$$

In addition, the local (approximate) slope of likelihood ratio tests is equal to $2K(\theta)$ \cite{bahadur1967rates}, therefore it is reasonable to approximate  \eqref{effT}  by replacing $c_T(\theta)$ with $c_T^*(\theta).$

In the next theorem{,} we provide the behaviour of $2K(\theta)$ when $\theta$ approaches zero.
\begin{theorem}\label{KL}
For a given alternative density $g(x; \theta)$ whose distribution belongs
to $\mathcal{G}$, such that $g(x;0)$ is given by \eqref{1}, it holds
%{
\begin{align*}
   2K(\theta)=\Bigg( \sqrt{\frac{2\pi}{\lambda}}\int\limits_{\mathbb{R^+}} (g'(x;\theta))^2e^{-\frac{\lambda}{2x}}x^{-\frac{3}{2}}dx-\frac{\lambda}{2}\Big(\int\limits_{\mathbb{R^+}}\frac{g'(x;\theta)}{x}dx\Big)^2\Bigg)\cdot\theta^2+o(\theta^2),\;\theta\to 0.
\end{align*}
%when $\theta$ is small.}
\end{theorem}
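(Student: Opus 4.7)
The plan is to write $K(\theta) = \inf_{\lambda>0} K(\theta;\lambda)$ and perform a joint second-order Taylor expansion of the map $(\theta,\lambda)\mapsto K(\theta,\lambda)$ around the point $(0,\lambda_0)$, where $\lambda_0$ is the scale parameter for which $g(x;0)=f(x;\lambda_0)$, and then minimise the resulting quadratic in $\lambda$.

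First I would check that both first partials vanish at $(0,\lambda_0)$. Differentiating under the integral sign, $K_\theta(0,\lambda_0)=0$ because $\log(g/f)$ vanishes at $\theta=0$ and $\int g'(x;0)\,dx=0$ from the normalization $\int g(x;\theta)\,dx\equiv 1$; and $K_\lambda(0,\lambda_0)=0$ is the standard consequence of $E[\partial_\lambda\log f(X;\lambda_0)]=0$ under $f(\cdot;\lambda_0)$, using $\partial_\lambda\log f(x;\lambda)=\tfrac{1}{2\lambda}-\tfrac{1}{2x}$ and $E[1/X]=1/\lambda_0$ for Lévy. In particular, the first-order condition $\partial_\lambda K=0$ determines the minimizer as $\lambda^*(\theta)=1/E_{g(\cdot;\theta)}[1/X]$.

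Next I would compute the three second partials at $(0,\lambda_0)$. A second differentiation yields
\begin{align*}
K_{\theta\theta}(0,\lambda_0)=\int_{\mathbb{R}^+}\frac{(g'(x;0))^2}{f(x;\lambda_0)}\,dx,\qquad K_{\lambda\lambda}(0,\lambda_0)=\frac{1}{2\lambda_0^2},\qquad K_{\theta\lambda}(0,\lambda_0)=\frac{1}{2}\int_{\mathbb{R}^+}\frac{g'(x;0)}{x}\,dx,
\end{align*}
where the first identity becomes the leading term of the stated expansion after substituting $1/f(x;\lambda_0)=\sqrt{2\pi/\lambda_0}\,x^{3/2}e^{\lambda_0/(2x)}$. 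Expanding $K(\theta,\lambda)$ to second order and using $\lambda^*(\theta)-\lambda_0=-\theta K_{\theta\lambda}/K_{\lambda\lambda}+o(\theta)$ from the perturbative solution of the first-order condition, substitution back gives
\begin{align*}
2K(\theta)=\theta^{2}\Big(K_{\theta\theta}(0,\lambda_0)-\frac{K_{\theta\lambda}(0,\lambda_0)^{2}}{K_{\lambda\lambda}(0,\lambda_0)}\Big)+o(\theta^{2}),
\end{align*}
and a direct simplification, using $K_{\theta\lambda}^{2}/K_{\lambda\lambda}=\tfrac{\lambda_0^{2}}{2}\big(\int g'(x;0)/x\,dx\big)^{2}$, reproduces the formula in the statement (with the convention $g'(x;\theta)\equiv g'(x;0)$ at leading order in $\theta$).

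The main technical obstacle will be the rigorous justification of differentiating twice under the integral sign and controlling the Taylor remainder uniformly. This relies on the regularity conditions hinted at just before the theorem, in particular the hypothesis $\int x^{-2}g(x;\theta)\,dx<\infty$ in a neighbourhood of $\theta=0$, which guarantees finiteness of $K_{\theta\lambda}$ and $K_{\lambda\lambda}$, and on a suitable integrable dominating function for $(g'(x;\theta))^{2}/g(x;\theta)$ near $\theta=0$ so that the remainder is indeed $o(\theta^{2})$ uniformly for $\lambda$ in a neighbourhood of $\lambda_0$.
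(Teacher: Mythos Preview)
Your proposal is correct, but the route differs from the paper's. The paper first solves the minimisation in $\lambda$ \emph{exactly} for every $\theta$: from $\partial_\lambda K(\theta;\lambda)=0$ one gets the closed-form minimiser $\lambda_0(\theta)=\big(\int g(x;\theta)/x\,dx\big)^{-1}$, substitutes it into $K(\theta;\lambda)$ to obtain $K(\theta)$ as an explicit one-variable function of $\theta$, and then Taylor-expands that function, showing $K(0)=K'(0)=0$ and computing $K''(0)$ directly. You instead keep $\lambda$ free, expand the bivariate map $(\theta,\lambda)\mapsto K(\theta,\lambda)$ to second order at $(0,\lambda_0)$, and only afterwards minimise the resulting quadratic in $\lambda$ via $\lambda^*(\theta)-\lambda_0=-\theta K_{\theta\lambda}/K_{\lambda\lambda}+o(\theta)$. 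The paper's order of operations is slightly more direct here precisely because the score equation in $\lambda$ is linear in $1/\lambda$ and hence explicitly solvable; your approach is the standard profile-likelihood expansion and would still work if the minimiser had no closed form. Both arrive at the same formula $K''(0)=\int (g'_\theta)^2/f-\tfrac{\lambda_0^2}{2}\big(\int g'_\theta/x\big)^2$; note that your $K_{\theta\lambda}^{2}/K_{\lambda\lambda}=\tfrac{\lambda_0^{2}}{2}(\int g'/x)^2$ in fact carries a $\lambda_0^{2}$, which clarifies that the $\lambda/2$ in the displayed statement is a typo for $\lambda^2/2$ (harmless in the applications, where $\lambda=1$).
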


In the following theorems, we present Bahadur exact and approximate slopes  of  Bhati--Kattumanil statistic and our  statistics, respectively.

\begin{theorem}\label{Tint} For an alternative $g(x;\theta)$ from
$\mathcal{G}$, the Bahadur exact slope of the statistic $I_{n}^{[1,1]}$
is 
\begin{align*}
c_{I}(\theta)=\frac{\Big({ 3}\int_{\mathbb{R}^+}\varphi(x)g'_{\theta}(x; 0)dx\Big)^{2}}{\sigma_T^{2}}\cdot\theta^{2}+o(\theta^{2}),\theta\to 0,
\end{align*}
where
\begin{equation*}
\varphi(x) =\int\limits_0^\infty2\big(1-F\Big(\frac{x+y}{4}; \lambda\Big)\big)f(y;\lambda)dy+F(x;\lambda).
\end{equation*}
\end{theorem}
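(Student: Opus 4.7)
My plan is to execute the classical Bahadur exact-slope programme for non-degenerate $U$-statistics, specialised to the Bhati--Kattumanil kernel. The first step is to recognise that, up to the asymptotic equivalence with its symmetrisation used in \cite{bhati2020jackknife}, $I_n^{[1,1]}$ is a bounded, non-degenerate degree-three $U$-statistic with kernel
\[
\Psi(x_1,x_2,x_3) = \tfrac{1}{3}\bigl[I\{(x_2+x_3)/4 \le x_1\}+I\{(x_1+x_3)/4 \le x_2\}+I\{(x_1+x_2)/4 \le x_3\}\bigr]-\tfrac12.
\]
Under $H_0$, Characterization~\ref{karakterizacija} with $a=b=1$ yields $P_{F_0}((X_1+X_2)/4 \le x) = F(x;\lambda)$; combined with the symmetry of $\Psi$, this gives $E_{F_0}\Psi = 0$ and identifies the first H\'ajek projection as $h_1(x) = \varphi(x)/3 - 1/2$. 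Non-degeneracy follows from $\sigma_T^2 = 9\,\mathrm{Var}(h_1(X))>0$, already secured by the numerical value quoted for \eqref{sigmaG}.

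I would then assemble the three ingredients of Bahadur's formula $c_I(\theta) = 2 f_{LD}(b(\theta))$. For the large-deviation rate function under $H_0$, Nikitin's theorem for bounded, non-degenerate $U$-statistics \cite{nikitinKnjiga} delivers the local expansion $f_{LD}(t) = t^2/(2\sigma_T^2) + o(t^2)$ as $t \to 0$, matching the CLT variance. For the probability limit under the alternative, the SLLN for $U$-statistics gives $I_n^{[1,1]} \overset{P}{\to} b(\theta) := E_{g_\theta}\Psi$. Differentiating $b(\theta)$ at $\theta = 0$ by the product rule applied to $g(\cdot;\theta)^{\otimes 3}$ and exploiting the symmetry of $\Psi$ under iid collapses the three resulting triple integrals into the Hoeffding identity $b'(0) = 3\int h_1(x)\, g'_\theta(x;0)\,dx$; applying the characterization once more to evaluate the singleton-position inner double integral as $F(x;\lambda)$ then expresses $b'(0)$ as a one-dimensional linear functional of $g'_\theta$ with integrand proportional to $\varphi$.

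Combining these, $c_I(\theta) = (b'(0))^2/\sigma_T^2 \cdot \theta^2 + o(\theta^2)$, with the combinatorial prefactor $m=3$ carrying through to produce the stated form of the slope. The principal obstacle will be the large-deviation step: one must verify that Nikitin's hypotheses apply to the specific symmetrised kernel at hand, and that the asymptotic-equivalence chain $I_n^{[1,1]} \sim \tilde U_n \sim T_n^{*}$ from \cite{bhati2020jackknife} is strong enough (at rate $o_P(1/\sqrt{n})$) to transfer the local rate function between these variants. By contrast, the computation of $b'(0)$ is clean once the characterization is deployed, because it eliminates the triple convolution in the singleton-position term and reduces the task to the one-dimensional integral in the statement.
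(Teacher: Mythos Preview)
Your proposal is correct and follows essentially the same route as the paper: both rest on the Bahadur-slope machinery for bounded non-degenerate $U$-statistics of Nikitin and Peaucelle, with the paper simply invoking \cite[Theorem 3]{nikitinMetron} directly while you unpack its ingredients (symmetrised kernel, first projection $h_1=\varphi/3-1/2$, local large-deviation rate $t^2/(2\sigma_T^2)$, and the Hoeffding identity for $b'(0)$). Your caveat about transferring the large-deviation rate across the asymptotic-equivalence chain is a fair technical concern, but the paper handles it no more explicitly than you do.
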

\begin{proof}[Proof of Theorem \ref{Tint}.]
The proof follows directly from \cite[Theorem 3]{nikitinMetron}. 
\end{proof}

\begin{theorem}\label{Tsup}
For an alternative $g(x;\theta)$ from
$\mathcal{G}$, the Bahadur approximate  slope of the statistic $J_{n, a}$ is equal to
\begin{align}\label{Bslope}
    c_{J}(\theta)=\frac{\sup_{t\in [0, 1]}\Big({2}\vert \int_{\mathbb{R}^+}\psi(x; t, a)g'_{\theta}(x; 0)dx\vert \Big)^{2}}{\sup_{t\in [0, 1]}\sigma^2(t)}\cdot \theta^2+o(\theta^2), 
\end{align}
where $\sigma^2(t)=\sup_{t\in[0, 1]} K(t, t)$, and
\begin{equation*}
     \psi(x; t, a)=\frac{1}{2} t^a (-\log (t))^{3/2} \Big(-2 t^{\frac{x}{4}} e^{-\frac{\sqrt{-\log (t)}}{\sqrt{2}}}+t^x+e^{-\sqrt{2} \sqrt{-\log (t)}}\Big).
\end{equation*}
\end{theorem}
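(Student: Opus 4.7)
The plan is to apply the approximate-slope recipe $c^*_J(\theta)=a^*_J\,(b^*_J(\theta))^2$ recalled earlier: read $a^*_J$ off the null tail of the limiting law of Theorem \ref{asimptotikaJ}, identify $b^*_J(\theta)$ as the in-probability limit of $J_{n,a}$ under $g(x;\theta)$, and Taylor-expand around $\theta=0$.

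For the tail constant, the limiting process $\sup_{t\in[0,1]}|\xi(t)|$ is the supremum on a compact interval of a centred, sample-continuous Gaussian process, so the classical Marcus--Shepp / Fernique asymptotics give $\log P(\sup_{t}|\xi(t)|>u)=-u^2/(2\sup_{t}K(t,t))\cdot(1+o(1))$ as $u\to\infty$, which identifies $a^*_J=1/\sup_{t\in[0,1]}\sigma^2(t)$. For the limit under the alternative I would argue as in the Remark following Theorem \ref{asimptotikaJ}: since $\hat\lambda\overset{P}{\to}\zeta(\theta)=1/E_\theta[1/X]$ by the strong LLN, the LLN for V-statistics with estimated nuisance parameters \cite{iverson1989effects} and the continuous mapping theorem yield $J_{n,a}\overset{P}{\to}b^*_J(\theta)=\sup_{t\in[0,1]}|h(t;\theta)|$ with
\[
h(t;\theta)=\Bigl(E_\theta\bigl[t^{X/\zeta(\theta)}\bigr]-\bigl(E_\theta\bigl[t^{X/(4\zeta(\theta))}\bigr]\bigr)^2\Bigr)\,t^a(-\log t)^{3/2}.
\]
By Characterization \ref{karakterizacija}, $h(t;0)\equiv 0$, so the whole content reduces to computing $h'_\theta(t;0)$ and using $(b^*_J(\theta))^2=\theta^2\sup_t (h'_\theta(t;0))^2+o(\theta^2)$.

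Computing $h'_\theta(t;0)$ is where I expect the main obstacle. The derivative splits into a direct piece whose integrand is proportional to $g'_\theta(x;0)$ and a scaling piece carrying $\zeta'(0)=-\lambda^2\int g'_\theta(x;0)/x\,dx$ from the $\theta$-dependence of $\zeta$. Using the explicit L\'evy Laplace transform $E_0[e^{-sX/\zeta}]=e^{-\sqrt{2\lambda s/\zeta}}$, a direct differentiation at $\zeta=\lambda$ shows that the $\zeta$-derivatives of $E_\theta[t^{X/\zeta}]$ and of $(E_\theta[t^{X/(4\zeta)}])^2$ coincide, so the scaling piece cancels exactly; this cancellation is the infinitesimal shadow of the scale-invariance built into Characterization \ref{karakterizacija}, so one expects it in advance. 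The surviving direct piece, after using $\int g'_\theta(x;0)\,dx=0$ to absorb the additive constant $e^{-\sqrt{-2\log t}}$, equals $h'_\theta(t;0)=2\int_{\mathbb{R}^+}\psi(x;t,a)g'_\theta(x;0)\,dx$. Plugging this and $a^*_J=1/\sup_t\sigma^2(t)$ into $c^*_J(\theta)=a^*_J(b^*_J(\theta))^2$ yields \eqref{Bslope}.
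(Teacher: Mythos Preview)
Your proposal is correct and follows essentially the same route as the paper: Marcus--Shepp tail asymptotics for $a^*_J$, the law of large numbers for V-statistics with estimated parameters (citing \cite{iverson1989effects}) to identify $b^*_J(\theta)=\sup_t|A_J(\theta;t)|$, and a Taylor expansion in $\theta$. Your treatment is in fact more careful than the paper's sketch in one respect: you explicitly verify that the contribution from $\zeta'(0)$ cancels, via the identity $(E_0[t^{X/(4\zeta)}])^2=E_0[t^{X/\zeta}]$ for all $\zeta$, which the paper leaves implicit in the claim that ``direct calculation'' gives the stated derivative.
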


{\begin{theorem}\label{Rint} For an alternative $g(x;\theta)$ from
$\mathcal{G}$, the Bahadur exact slope of the statistic $R_{n,a}$
is 
\begin{align*}
c_{R}(\theta)=\frac{\Big({ 2}\int_{\mathbb{R}^+}\zeta(x)g'_{\theta}(x; 0)dx\Big)^{2}}{\sigma_R^{2}(a)}\cdot\theta^{2}+o(\theta^{2}),\theta\to 0,
\end{align*}
where $\zeta$ is the first projection of the symmetric kernel $Z$ given by
\begin{align*}
    \zeta(x; a)=E(Z(X_1, X_2;a)\vert X_1=x).
\end{align*}
The expression for $\zeta$ is cumbersome and the exact form can be found in Appendix, (see \ref{zeta}).
\end{theorem}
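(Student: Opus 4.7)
The plan is to follow the same route as the proof of Theorem \ref{Tint}, namely to apply the exact Bahadur-slope formula for non-degenerate U/V-statistics from \cite[Theorem 3]{nikitinMetron}. Observe that, after substitution of the MLE, $R_{n,a}$ is a V-statistic of degree two with symmetric kernel
\begin{equation*}
Z(x,y;a)=\frac{3\sqrt{\pi}}{4}\!\left(\frac{1}{(a+\tfrac{x+y}{4\hat{\lambda}})^{5/2}}-\frac{1}{2(a+\tfrac{x}{\hat{\lambda}})^{5/2}}-\frac{1}{2(a+\tfrac{y}{\hat{\lambda}})^{5/2}}\right),
\end{equation*}
which, by Characterization \ref{karakterizacija}, has expectation zero under $H_0$; its first projection is precisely the $\zeta(x;a)$ already identified in the asymptotic variance expression of Theorem \ref{asimptotikaR}.

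First I would establish the probability limit of $R_{n,a}$ under the alternative. Using the law of large numbers for V-statistics with estimated parameters from \cite{iverson1989effects}, together with $\hat{\lambda}\xrightarrow{P}\lambda_{\theta}:=1/E_{g_{\theta}}(1/X)$, one obtains $R_{n,a}\xrightarrow{P} b(\theta):=E_{g_{\theta}}Z(X_1,X_2;a)\vert_{\hat{\lambda}=\lambda_{\theta}}$. Because $b(0)=0$, a first-order Taylor expansion yields $b(\theta)=b'(0)\theta+o(\theta)$ as $\theta\to 0$.

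The key computation is that of $b'(0)$. Differentiating $E_{g_{\theta}}Z(X_1,X_2;a)\vert_{\hat{\lambda}=\lambda_{\theta}}$ in $\theta$ at $0$ and invoking symmetry of the degree-two kernel gives a leading term $2\int_{\mathbb{R}^+}\zeta(x;a)g'_{\theta}(x;0)dx$, where $\zeta(x;a)$ is the first Hoeffding projection of $Z$ evaluated at $\hat{\lambda}=\lambda$. The additional contribution coming through the chain rule in $\lambda_{\theta}$ must be shown to be already absorbed into the expression \eqref{zeta} for $\zeta$; this is the scale-free analogue of the familiar fact that plug-in of an efficient estimator affects the first projection only through an influence-function correction, and verifying it is the main technical obstacle.

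Finally, plugging $b(\theta)$ into the Bahadur formula and using that for a non-degenerate V-statistic the large-deviation function satisfies $f_{LD}(t)=t^{2}/(2\sigma_R^{2}(a))+o(t^{2})$ as $t\to 0$, we obtain
\begin{equation*}
c_R(\theta)=2f_{LD}(b(\theta))=\frac{\big(2\int_{\mathbb{R}^+}\zeta(x;a)g'_{\theta}(x;0)dx\big)^{2}}{\sigma_R^{2}(a)}\theta^{2}+o(\theta^{2}),
\end{equation*}
which is the claimed expansion.
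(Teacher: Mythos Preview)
Your approach is essentially the one the paper has in mind; the paper's own proof is a single sentence pointing to \cite[Lemma~2.1]{milovsevic2016new}, and your outline (probability limit under the alternative via \cite{iverson1989effects}, Maclaurin expansion of $b(\theta)$, local large-deviation rate $t^{2}/(2\sigma_R^{2}(a))$ from \cite{nikitinMetron}) is exactly that argument spelled out.

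One correction on the point you flag as the main obstacle. The chain-rule contribution through $\lambda_{\theta}$ is not ``absorbed into $\zeta$''; the projection $\zeta(x;a)$ in \eqref{zeta} is simply $E\big(Z(X_1,X_2;a,1)\mid X_1=x\big)$ with no influence-function correction. Rather, that extra term vanishes: under $H_0$ one has
\[
E_{H_0}\Big[\frac{\partial}{\partial\nu}Z(X_1,X_2;a,\nu)\Big]_{\nu=\lambda}=0,
\]
by the same characterization argument used in the proof of Theorem~\ref{asimptotikaJ} to show $E\big[\partial_\nu\Psi\big]_{\nu=1}=0$ (and implicitly reused in the proof of Theorem~\ref{asimptotikaR}). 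With this, $b'(0)$ reduces cleanly to $2\int_{\mathbb{R}^+}\zeta(x;a)\,g'_{\theta}(x;0)\,dx$, and the rest of your derivation goes through verbatim.
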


}

We consider {the} following classes of alternatives that belong to family $\mathcal{G}$: 
\begin{itemize}
   \item  a mixture of the standard  L\'evy distribution and the L\'evy distribution with scale  parameter $\lambda\neq 1$,  with density:
    \begin{equation*} 
       g_1^{[\lambda]}(x;\theta)=(1-\theta)f_0(x)+ \frac{\theta}{\lambda}f_0\big(\frac{x}{\lambda}\big),\; x>0,\;\theta\in (0,1);
    \end{equation*}
    \item a Lehmann alternative with density
    
    \begin{align*}
        g_2(x;\theta)=({1+\theta}) F_0(x)^{\theta}f_0(x),\; x>0,\;\theta>0;
    \end{align*}
    \item a contamination alternative with $g_2$, and parameter $\beta$   with density
   \begin{align*}
        g_3^{[\beta]}(x;\theta)=(1-\theta)f_0(x)+&\theta \beta F_0^{\beta-1}(x)f_0(x),\;x>0,\;\theta\in (0,1), \;\beta>0;
    \end{align*}    
  \item a first Ley--Paindaveine alternative \cite{ley2009cam} with density
    \begin{equation*}
       g_4(x;\theta) = (1+\theta F_0(x))f_0(x)e^{-\theta(1-F_0(x))},\; x>0,\;\theta>0;
    \end{equation*}
    \item  a second Ley--Paindaveine alternative \cite{ley2009cam} with density
    \begin{equation*}
        g_5(x;\theta) = f_0(x)(1-\theta\pi\cos\big(\pi F_0(x)\big)),\; x>0,\; \theta\in [0, \pi^{-1}]. 
    \end{equation*}
\end{itemize} 
In what follows we present a calculation of the local approximate Bahadur efficiency of $J_1$ and alternative $g_2(x;\theta)$, while  results for all  considered statistics and alternatives   are presented in Table \ref{LABRE}.

From Theorem \ref{KL} we obtain 
\begin{align*}
2K(\theta)&=\Bigg(\int\limits_\mathbb{R^+}\frac{e^{-\frac{1}{2x}} \big(\log \big(\erf \big(\frac{1}{\sqrt{2} \sqrt{x}}\big)\big)+1\big)^2}{\sqrt{2\pi } x^{3/2}}dx-\frac{1}{2}\int\limits_\mathbb{R^+}\frac{e^{-\frac{1}{2 x}} \big(\log \big(\erf  \big(\frac{1}{\sqrt{2} \sqrt{x}}\big)\big)+1\big)}{\sqrt{2 \pi } x^{5/2}}dx\Bigg)\cdot\theta^2+o(\theta^2)\\
&=0.0233005\theta^2+o(\theta^2),\;\theta\to0.
\end{align*}
Next, from Theorem \ref{Tsup}, it follows that
\begin{align*}
c_J(\theta)&=\frac{{ 4}\sup_{t\in[0, 1]} A(t)}{\sup_{t\in [0, 1]}\sigma^2(t)}\theta^2+o(\theta^2)=\frac{{4}\sup_{t\in [0, 1]}\Big(\vert \int_{\mathbb{R}^+}\psi(x; t, a)g'_{\theta}(x; 0)dx\vert \Big)^{2}}{\sup_{t\in [0, 1]}\sigma^2(t)}\cdot \theta^2+o(\theta^2).    
\end{align*} 
We have that
\begin{align*}
   & \sup\limits_{t\in[0, 1]} A(t)=  \sup\limits_{t\in[0, 1]}\Big(\int\limits_\mathbb{R^+}t^a (-\log (t))^{3/2} \big(\log \big(\erf\Big(\frac{1}{\sqrt{2} \sqrt{y}}\Big)\big)+1\big)\times\big.\\
    &\big.\frac{e^{-\sqrt{2} \sqrt{-\log (t)}-\frac{1}{2 y}} \big(-2 t^{y/4} e^{\frac{\sqrt{-\log (t)}}{\sqrt{2}}}+t^y e^{\sqrt{2} \sqrt{-\log (t)}}+1\big)}{2 \sqrt{2 \pi } y^{3/2}}dy\Big)^2\\
    &\approx 0.0000149667,
\end{align*}
We highlight that the maximum of { the} function $A(t)$ (presented in Figure \ref{fig: A}) is calculated numerically.

\begin{figure}[ht]
\centering
\includegraphics[width=8cm]{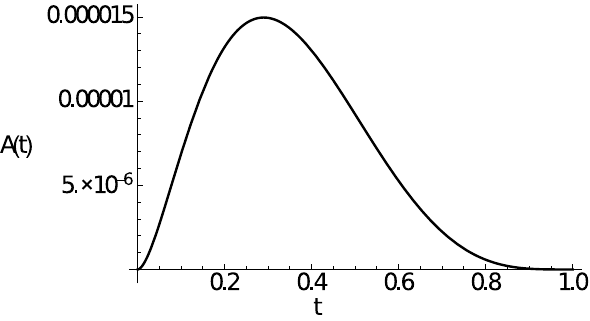}
\caption{Plot of the A$(t)$.}
\label{fig: A}

\end{figure}

Further,  $\sup_{t\in [0, 1]}\sigma^2(t)$ becomes
\begin{align*}
    \sup_{t\in [0, 1]} &\Big(\!-\! t^2 \big(-4 e^{-\frac{2 \sqrt{-\log(t^{5/4})}\!+\!\sqrt{-\log (t)}}{\sqrt{2}}}\!+\!4 e^{-\frac{\sqrt{-\log(t^2)}+2 \sqrt{-\log (t)}}{\sqrt{2}}}  
   \!+\!e^{-\sqrt{-2\log(t^2)}}
  \! -\!e^{-2  \sqrt{-2\log (t)}}\Big) \log ^3(t)\\&\approx {0.00388889} .
\end{align*}
and the calculation yields $eff(J_1)\approx 0.66$. The function $\sigma^2(t)$ is presented in Figure \ref{fig: sigma}.
\begin{figure}[ht]
\centering
\includegraphics[width=8cm]{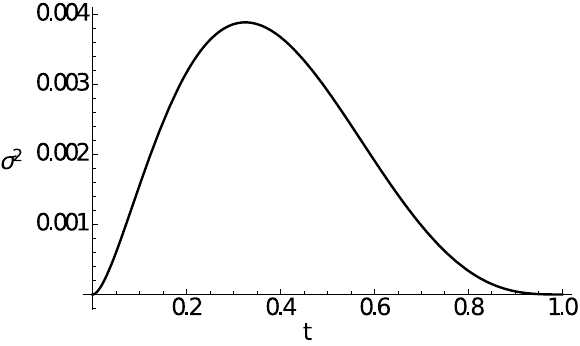}
\caption{Plot of $\sigma^2(t)$.}
\label{fig: sigma}

\end{figure}

\medskip

The results of our study are {displayed} in Table \ref{LABRE}. The tuning parameter $a$ significantly affects the efficiency of $R_{n, a}$ and $J_{n, a}$. In all considered cases, the Bahadur efficiency of $J_{n,a}$ is a decreasing function of $a$. This is not the case for the statistic $R_{n, a}$. However, it is notable that the maximal efficiency is attained in the neighbourhood of $a=1$. {The values of the local approximate Bahadur relative efficiencies for the generalized Bhati--Kattumanil statistic are given in Appendix B. }
{We can see that the new statistics outperform the Bhati--Kattumanil one, and that the statistic $R_{n, a}$ dominates the other two in terms of local approximate Bahadur efficiencies.}
\begin{rem}
It can be shown that for the test statistic $I^{[1, 1]}$ considered in this paper approximate and exact slopes coincide locally, and consequently, the approximate and exact Bahadur relative efficiencies w.r.t. LR test locally coincide because the approximate Bahadur slope for LR test equals to $2K(\theta)$, as we have noted earlier.
\end{rem}
\begin{table}[htbp]
\centering
\caption{Local approximate Bahadur relative efficiencies of $I^{[1, 1]}$,  $J_a$ and $R_a$ with respect to LR test}
\label{LABRE}
\centering
\begin{tabular}{llllll} 
\toprule
     & $g_1^{[10]}$   & $g_2$   & $g_3^{[3]}$    & $g_4$  & $g_5$    \\ 
\hline
$I^{[1, 1]}$  & 0.59   & 0.54   & 0.73   & 0.53  & 0.41    \\
$J_1$ & 0.91   & 0.66   & 0.79   & 0.68  & 0.69    \\
$J_2$ & 0.81   & 0.54   & 0.71   & 0.54  & 0.49    \\
$J_5$ & 0.56   & 0.36   & 0.52   & 0.35  & 0.25    \\
$J_{10}$        & 0.35   & 0.24   & 0.37   & 0.23  & 0.13    \\
$R_{0.2}$       & 0.53   & 0.79   & 0.61   & 0.80  & 0.86    \\
$R_{0.5}$       & 0.80   & \textbf{0.86} & 0.82   & \textbf{0.90} & \textbf{0.97}  \\
$R_{1}$         & \textbf{0.94} & 0.81   & \textbf{0.89} & 0.84  & 0.87    \\
$R_2$ & 0.93   & 0.69   & 0.84   & 0.70  & 0.65    \\
$R_5$ & 0.70   & 0.48   & 0.66   & 0.46  & 0.35    \\
%$R_{10}$        & 0.46   & 0.32   & 0.48   & 0.31  & 0.19    \\
\bottomrule
\end{tabular}

\end{table}

\section{Power  study}\label{sec:5}

In this section,  we explore finite sample properties of considered test statistics. In particular, we estimate the power of the tests, when the level of significance is $\alpha=0.05$, using Monte Carlo method with $N=10000$ replications.  The goal of this section is  to compare JEL and AJEL approaches from \cite{bhati2020jackknife} to the classical approach and to determine the empirical powers of new tests. The $p$-values are obtained utilizing the Monte Carlo approach.

{ The supremum in the calculation of the $J_{n, a}$ is obtained using grid search on 1000 equidistant points on [0, 1].% The computational times for the statistic $J_{n,a}$ are higher than the others, but not significantly higher. 

}
{ Other tests considered in the power study are:% as follows:
\begin{itemize}
\item Lilliefors-corrected Kolmogorov--Smirnov (KS) test statistic: 
\begin{align*}KS_n=\sup\limits_{x\in \mathbb{R}^+}\abs{F_n(x)-F(x;\hat{\lambda})};\end{align*}
\item Lilliefors-corrected Cramer--von Mises (CVM) test statistic:
\begin{align*}
CVM^2_n=\int\limits_{\mathbb{R}^+}(F_n(x)-F(x;\hat{\lambda}))^2dF(x;\hat{\lambda});\end{align*}
\item Lilliefors-corrected Anderson--Darling (AD) test statistic: 
\begin{align*}AD^2_n=n\int\limits_{\mathbb{R}^+}\dfrac{(F_n(x)-F(x;\hat{\lambda}))^2}{F(x;\hat{\lambda})(1-F(x;\hat{\lambda}))}dF(x;\hat{\lambda});\end{align*}
\item $N_1^a$ test statistic, considered in \cite{pitera2022goodness}:
\begin{align*}
    N_1^a=\sqrt{n}\cdot\frac{\hat\sigma^2_x(5\%, 25\%)-\hat\sigma^2_x(75\%, 95\%)}{\hat\sigma^2_x(5\%, 95\%)};
\end{align*}

\item $N_1^b$ test statistic, considered in \cite{pitera2022goodness}:
\begin{align*}
    N_1^b=\sqrt{n}\cdot\frac{2.00\cdot\hat\sigma^2_x(5\%, 25\%)-1.01\cdot\hat\sigma^2_x(75\%, 95\%)}{\hat\sigma^2_x(5\%, 95\%)},
\end{align*}
\end{itemize}
%Note that 
where $\hat\sigma^2_x$ denotes the sample quantile conditional variance estimator \cite{pitera2022goodness}:
 \begin{align*}
     \hat\sigma_x^2(a, b)=\frac{1}{[nb]-[na]}\sum\limits_{i=[na]+1}^{[nb]} \Big(X_{(i)}-\hat\mu_X(a, b)\Big)^2,
 \end{align*}
and $X_{(i)}$ is the ith-order statistic, and  $\hat\mu_X(a, b)=\frac{1}{[nb]-[na]}\sum_{i=[na]+1}^{[nb]} X_{(i)}$ is the conditional sample mean.
}

{ In the case of Lilliefors-corrected classical tests large values are taken to be significant, while in the case of both small and large values of $N_1^a$ and $N_1^b$, the null hypothesis is rejected.}

%We consider, among others, alternatives from \cite{bhati2020jackknife}.  More specifically, we consider the following { classes of} alternatives:
We consider the following classes of alternative distributions:

\begin{itemize}
\item  Burr distribution, denoted by \text{Burr(a, b, c}),  with a density 
$$g_B(x;a,b,c ) = c b  \dfrac{\big (\frac{x}{a} \big )^{b-1}}{a  \big (1+ \big (\frac{x}{a} \big )^b \big )^{c+1}}, \; x>0,\; a>0,\; b>0,\; c>0 ;$$
\item Chen distribution, denoted by \text{Chen($\nu$, $\lambda$),} with a density
$$
g_C(x;  \nu, \lambda)=\nu\lambda x^{\lambda-1}e^{\nu(1-e^{x^\lambda})+x^\lambda}, \; x>0, \;\nu>0, \;\lambda>0;
$$
{
\item Fr\'echet distribution, denoted by \text{FR($a$, $b$)}, with a density
$$g_{FR}(x;a, b ) =  \frac{a}{b}\left(\frac{x}{b}\right)^{-(a+1)}\exp\left(-\left(\frac{x}{b}\right)^{-a}\right), \; x>0,\; a>0,\; b>0;$$
\item Gamma distribution, denoted by \text{$\Gamma$($a$, $b$)}, with a density
$$g_{\Gamma}(x;a, b ) =  \frac{x^{a-1}b^a e^{-bx}}{\Gamma(a)}, \; x>0,\; a>0,\; b>0;$$
\item log-logistic distribution, denoted by \text{LL($a$, $b$}), with a density
$$g_{LL}(x;a, b ) =  \frac{a (\frac{x}{b})^a}{x[1+a (\frac{x}{b})^a]^2}, \; x>0,\; a>0,\; b>0;$$}
\item log-normal distribution, denoted by \text{LN(a, b}), with a density
$$g_{LN}(x;a, b ) = \frac{1} {\sqrt{ (2 \pi )} b x }e^{\frac{- ( (\log x - a )^2} { 2 b^2 }}, \; x>0,\; a\in\mathbb{R},\; b>0;$$ 
\item $\chi^2$ distribution, denoted by $\chi^2_{n}$, with a density
$$g_{\chi^2}(x; n ) = \frac{1}{2^{\frac{n}{2}} \Gamma \big (\frac{n}{2} \big ) } x^{\frac{n}{2}-1} e^{-\frac{x}{2}}, \; x>0, \; n\in\mathbb{N};$$
\item half-normal distribution, denoted by \text{HN(a, b}), with a density
$$g_{HN}(x;a,b )=\frac{e^{-\frac{ (x-a )^2}{2b^2}}}{\sqrt{2\pi b^2}}+\frac{e^{-\frac{ (x+a )^2}{2b^2}}}{\sqrt{2\pi b^2}}, \; x>a,\; a\in\mathbb{R},\; b>0;$$
{\item shifted log-gamma distribution, denoted by \text{LG($a$, $b$}), with a density
$$g_{LG}(x;a, b ) = \frac{b^a}{\Gamma(a)}\frac{(\log (x+1))^{a-1}}{(x+1)^{b+1}}, \; x>0,\; a>0,\; b>0;$$} 
\item Weibull distribution, denoted by  \text{W(a, b}) with a density
$$g_W (x;  a, b )= \frac{a}{b}  \Big (\frac{x}{b} \Big )^{a-1} e^{(-\frac{x}{b})^a}, \; x>0,\; a>0,\; b>0;$$

\end{itemize} 
{The majority of these alternatives are considered in \cite{bhati2020jackknife}.}

\begin{landscape}
\begin{table}[htbp] \centering
\caption{Comparison of empirical powers - { ML estimate}}
\label{zdruzene_tabele}
\begin{tabular}{lllllllllllllllllll}
\hline
Distribution & n & $\bar I^{[ 1, 1]}$ & $J_{1}$ & $J_{2}$ & $J_{5}$ & $J_{10}$ & $R_{0.2}$ & $R_{0.5}$ & $R_{1}$ & $R_{2}$ & $R_{5}$ & JEL & AJEL & KS & CVM & AD & $N_1^a$ & $N_1^b$ \\ \hline
L\'evy(0, 0.5) & 25 & 0.05 & 0.05 & 0.04 & 0.05 & 0.05 & 0.05 & 0.05 & 0.05 & 0.05 & 0.05 & 0.06 & 0.04 & 0.06 & 0.05 & 0.05 & 0.05 & 0.05 \\
L\'evy(0, 1) & 25 & 0.04 & 0.05 & 0.05 & 0.05 & 0.05 & 0.05 & 0.05 & 0.05 & 0.05 & 0.05 & 0.05 & 0.04 & 0.06 & 0.05 & 0.05 & 0.06 & 0.05 \\
L\'evy(0, 2) & 25 & 0.05 & 0.05 & 0.05 & 0.05 & 0.05 & 0.05 & 0.05 & 0.05 & 0.05 & 0.05 & 0.05 & 0.04 & 0.05 & 0.05 & 0.05 & 0.05 & 0.05 \\
Burr(1.5, 0.5, 0.5) & 25 & 0 & 0.55 & 0.43 & 0.43 & 0.53 & 0.98 & 0.98 & 0.64 & 0.60 & 0.59 & 0.33 & 0.21 & \textbf{1} & \textbf{1} & \textbf{1} & 0.18 & 0.42 \\
LN(0, 1) & 25 & \textbf{0.99} & 0.71 & 0.89 & 0.98 & \textbf{0.99} & 0.37 & 0.63 & 0.82 & 0.95 & \textbf{0.99} & 0.29 & 0.24 & 0.86 & 0.88 & 0.91 & 0.30 & 0.38 \\
$\chi^2(3)$ & 25 & \textbf{1} & 0.61 & 0.83 & 0.96 & 0.99 & 0.36 & 0.48 & 0.69 & 0.87 & 0.97 & 0.52 & 0.39 & 0.96 & 0.96 & 0.98 & 0.57 & 0.59 \\
HN(0,1) & 25 & \textbf{1} & 0.41 & 0.60 & 0.83 & 0.92 & 0.43 & 0.35 & 0.45 & 0.65 & 0.85 & 0.53 & 0.42 & 0.96 & 0.95 & 0.97 & 0.73 & 0.63 \\
$\Gamma (3, 2)$ & 25 & \textbf{1} & \textbf{1} & \textbf{1} & \textbf{1} & \textbf{1} & 0.87 & 0.98 & \textbf{1} & \textbf{1} & \textbf{1} & 0.83 & 0.76 & \textbf{1} & \textbf{1} & \textbf{1} & 0.73 & 0.79 \\
W(2, 1) & 25 & \textbf{1} & 0.98 & \textbf{1} & \textbf{1} & \textbf{1} & 0.76 & 0.93 & 0.98 & \textbf{1} & \textbf{1} & 0.79 & 0.77 & \textbf{1} & \textbf{1} & \textbf{1} & 0.86 & 0.87 \\
$\Gamma(0.4, 2)$ & 25 & 0.16 & 0.44 & 0.45 & 0.44 & 0.44 & 0.98 & 0.97 & 0.56 & 0.48 & 0.43 & 0.06 & 0.04 & \textbf{0.99} & \textbf{0.99} & \textbf{0.99} & 0.22 & 0.03 \\
$W(0.4, 2)$ & 25 & 0 & 0.47 & 0.47 & 0.47 & 0.47 & 0.98 & 0.97 & 0.56 & 0.49 & 0.42 & 0.07 & 0.05 & 0.98 & \textbf{0.99} & \textbf{0.99} & 0.06 & 0.04 \\
$LN(0, 2)$ & 25 & 0.06 & 0.22 & 0.22 & 0.22 & 0.22 & \textbf{0.58} & 0.39 & 0.20 & 0.10 & 0.05 & 0.06 & 0.04 & 0.53 & 0.56 & 0.55 & 0.07 & 0.04 \\
$Chen(2, 0.4)$ & 25 & 0.06 & 0.46 & 0.46 & 0.47 & 0.46 & \textbf{0.93} & 0.85 & 0.47 & 0.33 & 0.22 & 0.07 & 0.05 & 0.91 & \textbf{0.93} & 0.92 & 0.24 & 0.05 \\
LG(7, 2) & 25 & \textbf{0.56} & 0.25 & 0.35 & 0.47 & 0.52 & 0.14 & 0.23 & 0.32 & 0.45 & 0.53 & 0.08 & 0.07 & 0.27 & 0.30 & 0.32 & 0.15 & 0.22 \\
LL(1, 2) & 25 & 0.24 & 0.13 & 0.07 & 0.10 & 0.14 & \textbf{0.51} & 0.28 & 0.11 & 0.08 & 0.11 & 0.06 & 0.05 & 0.47 & 0.48 & 0.48 & 0.09 & 0.10 \\
FR(1, 1) & 25 & 0.76 & 0.64 & 0.70 & 0.70 & 0.60 & 0.49 & 0.66 & 0.74 & \textbf{0.77} & 0.69 & 0.11 & 0.09 & 0.59 & 0.68 & 0.67 & 0.10 & 0.18 \\ \hline
L\'evy(0, 0.5) & 50 & 0.05 & 0.05 & 0.05 & 0.05 & 0.05 & 0.05 & 0.05 & 0.05 & 0.05 & 0.05 & 0.05 & 0.04 & 0.05 & 0.05 & 0.05 & 0.05 & 0.05 \\
L\'evy(0, 1) & 50 & 0.05 & 0.05 & 0.05 & 0.05 & 0.05 & 0.05 & 0.05 & 0.05 & 0.05 & 0.05 & 0.05 & 0.04 & 0.05 & 0.05 & 0.05 & 0.05 & 0.05 \\
L\'evy(0, 2) & 50 & 0.05 & 0.05 & 0.05 & 0.05 & 0.05 & 0.05 & 0.04 & 0.05 & 0.05 & 0.05 & 0.05 & 0.05 & 0.05 & 0.05 & 0.05 & 0.05 & 0.05 \\
Burr(1.5, 0.5, 0.5) & 50 & 0 & 0.67 & 0.56 & 0.58 & 0.66 & \textbf{1} & 0.81 & 0.73 & 0.68 & 0.72 & 0.41 & 0.35 & \textbf{1} & \textbf{1} & \textbf{1} & 0.27 & 0.72 \\
LN(0, 1) & 50 & \textbf{1} & 0.93 & 0.99 & \textbf{1} & \textbf{1} & 0.51 & 0.83 & 0.97 & \textbf{1} & \textbf{1} & 0.62 & 0.57 & \textbf{1} & \textbf{1} & \textbf{1} & 0.74 & 0.76 \\
$\chi^2(3)$ & 50 & \textbf{1} & 0.76 & 0.94 & 0.99 & \textbf{1} & 0.36 & 0.53 & 0.82 & 0.95 & 0.99 & 0.67 & 0.72 & \textbf{1} & \textbf{1} & \textbf{1} & 0.98 & 0.95 \\
HN(0,1) & 50 & \textbf{1} & 0.48 & 0.68 & 0.89 & 0.95 & 0.55 & 0.32 & 0.46 & 0.71 & 0.91 & 0.76 & 0.73 & \textbf{1} & \textbf{1} & \textbf{1} & \textbf{1} & 0.97 \\
$\Gamma (3, 2)$ & 50 & \textbf{1} & \textbf{1} & \textbf{1} & \textbf{1} & \textbf{1} & 0.95 & \textbf{1} & \textbf{1} & \textbf{1} & \textbf{1} & 0.99 & 0.98 & \textbf{1} & \textbf{1} & \textbf{1} & \textbf{1} & 0.99 \\
W(2, 1) & 50 & \textbf{1} & \textbf{1} & \textbf{1} & \textbf{1} & \textbf{1} & 0.84 & 0.98 & \textbf{1} & \textbf{1} & \textbf{1} & 0.99 & 0.97 & \textbf{1} & \textbf{1} & \textbf{1} & \textbf{1} & \textbf{1} \\
$\Gamma(0.4, 2)$ & 50 & 0.27 & 0.59 & 0.59 & 0.58 & 0.58 & \textbf{1} & 0.74 & 0.66 & 0.57 & 0.59 & 0.06 & 0.05 & \textbf{1} & \textbf{1} & \textbf{1} & 0.65 & 0.04 \\
$W(0.4, 2)$ & 50 & 0.01 & 0.58 & 0.58 & 0.59 & 0.59 & \textbf{1} & 0.74 & 0.64 & 0.57 & 0.60 & 0.06 & 0.05 & \textbf{1} & \textbf{1} & \textbf{1} & 0.17 & 0.04 \\
$LN(0, 2)$ & 50 & 0.07 & 0.53 & 0.52 & 0.53 & 0.52 & \textbf{0.86} & 0.69 & 0.51 & 0.29 & 0.10 & 0.05 & 0.04 & 0.81 & 0.84 & 0.84 & 0.17 & 0.04 \\
$Chen(2, 0.4)$ & 50 & 0.51 & 0.64 & 0.64 & 0.64 & 0.64 & \textbf{1} & 0.80 & 0.69 & 0.58 & 0.48 & 0.08 & 0.06 & \textbf{1} & \textbf{1} & \textbf{1} & 0.73 & 0.10 \\
LG(7, 2) & 50 & \textbf{0.88} & 0.36 & 0.58 & 0.77 & 0.81 & 0.18 & 0.33 & 0.54 & 0.68 & 0.84 & 0.13 & 0.11 & 0.51 & 0.55 & 0.65 & 0.36 & 0.46 \\
LL(1, 2) & 50 & 0.43 & 0.30 & 0.13 & 0.11 & 0.20 & \textbf{0.77} & 0.48 & 0.22 & 0.09 & 0.12 & 0.06 & 0.05 & 0.72 & 0.75 & 0.76 & 0.21 & 0.19 \\
FR(1, 1) & 50 & 0.95 & 0.89 & 0.93 & 0.93 & 0.89 & 0.80 & 0.91 & 0.95 & \textbf{0.96} & 0.94 & 0.21 & 0.19 & 0.90 & 0.95 & \textbf{0.96} & 0.22 & 0.34 \\ \hline
\end{tabular}
\end{table}
\end{landscape}

{
 %In all cases except $N_1^a$ and $N_1^b$, large values of test statistics (or their absolute values) are taken to be significant, while in the case of $N_1^a$ and $N_1^b$ both large and small values are taken to be significant. 
 Results are presented in Table \ref{zdruzene_tabele}. For the sake of brevity, sample sizes are dropped from labels whenever they can be clearly determined. When the tests are compared, it can be seen that no new test is preferable among the others. That is in concordance with
 \cite{Janssen},
which asserts that the global power function of any nonparametric test is flat on balls
of alternatives except for alternatives coming from a finite-dimensional subspace.

From Table \ref{zdruzene_tabele}, it can be seen that JEL and AJEL approaches, proposed in \cite{bhati2020jackknife}, are less powerful than classical, whenever the testing is utilized via the original version of  $\vert I^{[1,1]}\vert $. It is also notable that  the power of $R_{a}$ is significantly affected by the value of tuning parameter $a$ and alternative distribution. Having all results in mind we recommend the value in the interval $[0.5,1]$. The behaviour of $J_{a}$ is less sensitive to { the} change of parameter $a$. In almost all cases both $R_{a}$ and $J_{a}$ dominate JEL and AJEL competitors. { It can be concluded that novel tests exhibit better performance  when compared with the tests $N_1^a$ and $N_1^b$, proposed in \cite{pitera2022goodness}. %It is also notable that the new tests are comparable with tests based on empirical distribution function (EDF). In some cases the new tests show better performance over the EDF based tests.
 When compared to  EDF-based tests, in some cases novel tests show better performance, while in other cases they are comparable. 
 }

 }

\section{Real data examples}\label{sec:6}
{ 
In this section, we apply the novel tests presented in this paper on two real data sets considered in \cite{bhati2020jackknife}. The data sets and their visual representations are provided in Appendix C. %\ref{RDATA}.

The first one { (Rainfall)} contains the weighted rainfall data for the month of January in India. Although there  is no objective
reason for modelling data with such a shape  with the L\'evy distribution, in  \cite{bhati2020jackknife}, authors concluded that the data follows the L\'evy distribution \cite[p. 10]{bhati2020jackknife}.
However, all tests we consider report $p$-values smaller than { $0.05$}, which clearly implies that L\'evy distribution is not a justified choice.

{The second data set { (Hillside)} consists of the well yields near Bel Air, Hartford county, Maryland. The $p$-values are presented in Table \ref{pvalc} { (see also Table \ref{pval_median}).} 
From the { Figure \ref{fig: well}} presented in Appendix C, it can be deduced that the empirical density { of the Hillside data} is, among the distributions studied in the simulation study, closest to {LL(1, 2).} %$\Gamma(0.4, 2)$.
Having in mind $R_{0,2}$ that
%,R_{0,5}$ and $R_{1}$
is quite powerful against this alternative, we cannot conclude that the L\'evy distribution is the appropriate model for the { Hillside data.}}
%The decision for the Hillside data set remains uncertain (see Table \ref{pvalc}).

%{\color{green}The tests which show higher powers, especially against close alternatives, reject the L\'evy distribution. These tests are able to detect the subtle differences between the L\'evy distribution and the inverse gamma distributions with different parameters.} Therefore, we cannot conclude that the L\'evy distribution is the appropriate model for the inverse of the Hillside data.

\begin{table}[htbp]\centering
\caption{$p$-values of novel tests - {} ML estimate }
\label{pvalc}
\begin{tabular}{@{}llllll@{}}
\toprule
 & $R_{0.2}$ & $R_{0.5}$ & $R_{1}$ & $R_{2}$ & $R_{5}$ \\ \midrule
Rainfall & 0.014 & 0 & 0 & 0 & 0 \\
Hillside & 0.004 & 0.006 & 0.024 & 0.106 & 0.494 \\ \midrule
 & $J_{1}$ & $J_{2}$ & $J_{5}$ & $J_{10}$ &  \\ \midrule
Rainfall & 0 & 0 & 0 & 0 &  \\
Hillside & 0.021 & 0.07 & 0.281 & 0.622 &  \\ \bottomrule
\end{tabular}
\end{table}

}
{
\section{Concluding remarks}
In this paper, we proposed {two} new goodness-of-fit tests} for the L\'evy distribution with arbitrary scale parameter {and a generalization of an existing one}. 
The asymptotic distributions of the proposed tests were derived and the local approximate Bahadur efficiencies of the proposed tests and the generalized Bhati--Kattumanil test were compared. 
Obtained empirical powers  also clearly indicate the dominance over other specific goodness-of-fit tests for the L\'evy distribution. 

{
We end our work by identifying some open research questions. From the results of our empirical study presented in Section 5 and Appendix D, we note that the test's powers are sensitive to the choice of the estimator of the scale parameter. Therefore it is of interest to further analyse  the small sample behaviour of our tests under different estimators, not included in the study. In addition, a prominent direction of further research would be to look for the adaptation of proposed tests when both location and scale parameters are unknown. This case is even more challenging for developing asymptotic properties of test statistics under the null and fixed alternative as well. It would be also interesting to look for tests' behaviour under contiguous alternatives.
}

\section*{Acknowledgements}
{We express our deep gratitude to professor Deepesh Bhati for providing us the code for computing empirical powers of JEL and AJEL test statistics.

{ We are also grateful to two anonymous referees whose suggestions have contributed to the quality of this paper.}
}
\section*{Declaration of interests}
{
The work of B. Milo\v sevi\'c is  supported by the Ministry of Science, Technological Development and Innovations of the Republic of Serbia (the contract 451-03-47/2023-01/ 200104). { The work is also supported by the COST action
CA21163 - Text, functional and other high-dimensional data in econometrics: New models, methods, applications (HiTEc).
}
\bigskip

%\begin{appendices}
\section*{Appendix A -- Proofs}
\begin{proof}[Proof of Theorem \ref{asimptotikaJ}]
It is easy to see that, under $H_0$, the distribution of $J_{n,a}$ does not depend on $\lambda$ which justifies the usage of the introduced class for testing the composite hypothesis. Therefore, when deriving the limiting distribution we may suppose that $\lambda=1$. 
%{Denote with $\hat{\lambda}$ the maximum likelihood estimate of $\lambda$. 
%}
It { should be noted} that the statistic \eqref{statJn} can be represented as 
\begin{equation*}
    J_{n, a}=\sup\limits_{t\in [0, 1]}\abs{V_n(t; a,  \hat{\lambda})},
\end{equation*}where, for each $t$ and $a$, 
\begin{equation*}\label{vstat}
    V_n(t; a, \hat\lambda)=\frac{1}{n^2}\sum\limits_{i_1, i_2}^n\Psi(X_{i_1}, X_{i_2}; t, a, \hat\lambda),
\end{equation*}
 is a $V$-statistic with the estimated parameter $\lambda$, with symmetric kernel $\Psi(\cdot; t, a)$.
 
 Applying the Taylor expansion, we get{
\begin{align}\label{razbijanje}
\sqrt{n}V_n(t; a, \hat\lambda)=\sqrt{n}V_n(t; a,1)+\sqrt{n}(\hat\lambda-1)\frac{\partial V_n(t; a,\nu)}{\partial\nu}\bigr\vert _{\nu=1}+R_n(t),\end{align}
where
\begin{align*}
    R_n(t)=\frac{\sqrt{n}}{2}(\hat\lambda-1)^2\frac{\partial^2 V_n(t; a,\nu)}{\partial\nu^2}\Bigr\vert _{\nu=\lambda_1}.
\end{align*}
and $\lambda_1$ lies between $1$ and $\hat\lambda$. We need to establish that $\sup\limits_{t\in[0,1]}|R_n(t)|$ is an $o_P(1)$ sequence.
It can be shown that 
\begin{align}\label{normalnost_lambda}
    \sqrt{n}(\hat\lambda-1)\xrightarrow{n\to\infty}\mathcal{N}(0, 2),
\end{align}
and that $\hat\lambda$ is a consistent estimator of $1$ (see \cite{ali2005inference}).
 {From (\ref{normalnost_lambda}) %it follows that
 %\begin{equation*}
%n(\hat\lambda-1)^2\xrightarrow{n\to\infty}\Gamma\Big(\frac{1}{2}, 4\Big),
% \end{equation*}
 and from Slutsky's theorem we get $\sqrt{n}(\hat\lambda-1)^2\xrightarrow{P}0$.
 
 We have that \begin{align*}
     \frac{\partial^2 \Psi(X, Y; t, a, \nu)}{\partial\nu^2}\Bigr\vert _{\nu=\lambda_1}&=\frac{t^a (-\log (t))^{5/2}}{16 \lambda_1^4} (-8 X t^{\frac{X}{\lambda_1}} (2 \lambda_1+X \log (t))-8 Y t^{\frac{Y}{\lambda_1}} (2 \lambda_1+Y \log (t))\\&+(X+Y) t^{\frac{X+Y}{4 \lambda_1}} (8 \lambda_1+\log (t) (X+Y))).
 \end{align*}
If we denote with
 \begin{equation*}
    g(t;X)=\frac{8 X t^{\frac{X}{\lambda_1}} (2 \lambda_1+X \log (t))}{16 \lambda_1^4},
    \end{equation*}
    then the following holds
 \begin{equation*}
    |g(t; X)|%=\Big|\frac{8 X t^{\frac{X}{\lambda_1}} (2 \lambda_1+X \log (t))}{16 \lambda_1^4}\Big|%
    \leq\frac{1}{\lambda_1}\Big(\Big|\frac{X t^{\frac{X}{\lambda_1}}}{\lambda_1^2}\Big|+\Big|\frac{X^2 t^{\frac{X}{\lambda_1}}\log (t)}{2\lambda_1^3}\Big|\Big).
 \end{equation*} We have that
 \begin{equation*}
     \Big|\frac{X t^{\frac{X}{\lambda_1}}}{\lambda_1^2}\Big|=\frac{X t^{\frac{X}{\lambda_1}}}{\lambda_1^2}=\frac{U^2 e^{U\log(t)}}{X}, \;  U=\frac{X}{\lambda_1}
 \end{equation*}
 and noting that $\log(t)<0$,  we obtain that for every fixed $t$ function $h(u)=u^2e^{u\log(t)}$ attains its maximum at the point $u=-\frac{2}{\log(t)}$. Therefore, 
 \begin{equation}\label{korak1}
    \frac{X t^{\frac{X}{\lambda_1}}}{\lambda_1^2}\leq \frac{4e^{-2}(-\log(t))^{-2}}{X}.
 \end{equation}
 Similarly, we get that
 \begin{equation}\label{korak2}
     \Big|\frac{X^2 t^{\frac{X}{\lambda_1}}\log (t)}{2\lambda_1^3}\Big|\leq \frac{27e^{-3}(-\log(t))^{-2}}{2X}.
 \end{equation}
 Since $\lambda_1\geq \min(1, \hat\lambda)$, we get that $\frac{1}{\lambda_1}\leq\max\Big(1, \frac{1}{\hat\lambda}\Big)$. From (\ref{korak1}) and (\ref{korak2}) we conclude that
 \begin{equation}\label{ogranicenje}
     |g(t; X)| \leq \frac{(4e+13.5)\max\Big(1, \frac{1}{\hat\lambda}\Big)}{Xe^3(-\log(t))^2}.
 \end{equation}
 Function $\omega(t; a)=t^a(-\log(t))^{\frac{1}{2}}$ attains its maximum as  a continuous function of $t$ on a compact set. Applying (\ref{ogranicenje}) three times, we get
 \begin{align*}
    & \Big |\frac{\partial^2 \Psi(X, Y; t, a, \nu)}{\partial\nu^2}\Bigr\vert _{\nu=\lambda_1}\Big |= t^a (-\log (t))^{5/2}|2g\Big(t; \frac{X+Y}{4}\Big)-g(t;X)-g(t;Y))|\\
    &\leq t^a (-\log (t))^{5/2}(2\big|g\Big(t; \frac{X+Y}{4}\Big)\big|+|g(t;X)|+|g(t;Y))|\leq 
    \max\limits_{t\in[0, 1]}|t^a(-\log(t))^{\frac{1}{2}}|(4e+13.5)e^{-3}\times
    \\&\max\Big(1, \frac{1}{\hat\lambda}\Big)\Big(\frac{8}{X+Y}+\frac{1}{X}+\frac{1}{Y}\Big)=C(a)\max\Big(1, \frac{1}{\hat\lambda}\Big)\Big(\frac{8}{X+Y}+\frac{1}{X}+\frac{1}{Y}\Big).
 \end{align*}
 Note that that $C(a),
 \frac{1}{n}\sum\limits_{i}\frac{1}{X_i}$, and $\frac{1}{n^2}\sum\limits_{i, j}\frac{1}{X_i+X_j}$
 %, and $\sqrt{n}(\hat\lambda-1)^2$ 
 are $O_P(1)$ sequences.
 
Using the continuous mapping theorem, we get $\max\Big(1, \frac{1}{\hat\lambda}\Big)\xrightarrow{P}1$. Using the law of large numbers for $V-$ statistics, the triangle inequality and Slutsky's theorem, we  have that
\begin{align*}
    &\sup\limits_{t\in [0, 1]}\Big|\frac{\partial^2 V_n(t; a,\nu)}{\partial\nu^2}\Big|\leq \sup\limits_{t\in [0, 1]}\frac{1}{n^2}\sum\limits_{i_1, i_2}^n\Big|\frac{\partial^2 \Psi(X_{i_1}, X_{i_2}; t, a, \nu)}{\partial\nu^2}\Big|\leq
   \\&C(a)\max\Big(1, \frac{1}{\hat\lambda}\Big)\frac{1}{n^2}\sum\limits_{i_1, i_2}^n \Big(\frac{8}{X_{i_1}+X_{i_2}}+\frac{1}{X_{i_1}}+\frac{1}{X_{i_2}}\Big)\xrightarrow{P}
    C(a)E\Big(\frac{8}{X+Y}+\frac{1}{X}+\frac{1}{Y}\Big)=4C(a).
\end{align*}
We have established that $\sup\limits_{t\in [0, 1]}\Big|\frac{\partial^2 V_n(t; a,\nu)}{\partial\nu^2}\Big|$ is an $O_P(1)$ sequence. Slutsky's theorem, along with the fact that ${\sqrt{n}(\hat\lambda-1)^2}$ is an $o_P(1)$ sequence establishes that $\sup\limits_{t\in[0,1]}|R_n(t)|$ is an $o_P(1)$ sequence.
}}

 Using the law of large numbers for $V-$ statistics once more, we  have that 
\begin{align*}\frac{\partial V_n(t;a, \nu)}{\partial\nu}\Bigr\vert _{\nu=1}\overset{P}{\to}t^a(\log t)^{\frac{3}{2}}E\Big(\frac{1}{2}X_1t^{X_1}+\frac{1}{2}X_2t^{X_2}-\frac{1}{4}(X_1+X_2)t^{\frac{1}{4}(X_1+X_2)}\Big)=0.% n\to\infty
\end{align*}
{ {The statement of the convergence  $\sup\limits_{t\in[0, 1]}\Big\vert \frac{\partial V_n(t;a, \nu)}{\partial\nu}\bigr\vert _{\nu=1}\Big\vert \overset{P}{\to} 0$, is formalized in the following lemma.}
\begin{lemma}\label{lemma1}
The limit in probability { under $H_0$} of
$\sup\limits_{t\in[0, 1]}\Bigr\vert \frac{\partial V_n(t;a, \nu)}{\partial\nu}\bigr\vert _{\nu=1}\Bigr\vert, $ as $n\to \infty$, equals 0.
\end{lemma}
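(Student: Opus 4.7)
The plan is to recognize $\partial V_n(t;a,\nu)/\partial\nu|_{\nu=1}$ as a V-statistic in $X_1,\dots,X_n$ indexed by $t\in[0,1]$ with symmetric kernel
\[
\widetilde\Psi(X,Y;t,a)=t^a(-\log t)^{5/2}\Bigl[\tfrac{X+Y}{4}\,t^{(X+Y)/4}-\tfrac{X}{2}t^X-\tfrac{Y}{2}t^Y\Bigr],
\]
and to show that this V-statistic converges to zero uniformly in $t$. The pointwise centering $E[\widetilde\Psi(X,Y;t,a)]=0$ under $H_0$ is exactly what Characterization \ref{karakterizacija} delivers with $a=b=1$: one has $(X+Y)/4\stackrel{d}{=}X$, which forces $E[\tfrac{X+Y}{4}\,t^{(X+Y)/4}]=E[X t^X]$ and makes the bracketed expression mean zero for every $t$. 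The standard law of large numbers for V-statistics then yields $\partial V_n(t;a,\nu)/\partial\nu|_{\nu=1}\xrightarrow{P}0$ for each fixed $t$.

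To upgrade pointwise convergence to the uniform statement of the lemma I would establish two ingredients. First, a \emph{uniform bound} on $\widetilde\Psi$: the substitution $u=-\log t$ together with the elementary inequality $Ze^{-uZ}\le (eu)^{-1}$ applied to $Z\in\{X,Y,X+Y\}$ yields
\[
|\widetilde\Psi(X,Y;t,a)|\;\le\;\tfrac{2}{e}\,t^a(-\log t)^{3/2},
\]
whose right-hand side is bounded on $[0,1]$ for every $a\ge 1$ and vanishes at the endpoints, so $|\widetilde\Psi(\cdot,\cdot;t,a)|\le M(a)$ uniformly in $X,Y>0$ and $t\in[0,1]$. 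Second, a \emph{uniform modulus of continuity} in $t$: differentiating $\widetilde\Psi$ in $t$ and applying the same $u=-\log t$ substitution, together with analogous inequalities such as $Z^2e^{-uZ}\le 4(eu)^{-2}$, produces an estimate of the form $|\widetilde\Psi(X,Y;t,a)-\widetilde\Psi(X,Y;s,a)|\le\omega_a(|t-s|)$ with $\omega_a(\delta)\to 0$ as $\delta\to 0$, uniformly in $X,Y>0$.

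With these two ingredients the lemma follows from a standard compactness argument: for any $\varepsilon>0$ choose a finite net $\{t_1,\dots,t_N\}\subset[0,1]$ fine enough that $\omega_a(\max_k|t_{k+1}-t_k|)<\varepsilon$; the pointwise modulus propagates to the V-statistic via the triangle inequality, so on each net cell $\sup_{t}|V_n(t)-V_n(t_k)|\le\varepsilon$, while the pointwise LLN applied at each of the finitely many $t_k$ gives $\max_k|V_n(t_k)|\xrightarrow{P}0$; a union bound then yields $\sup_{t\in[0,1]}|V_n(t)|\xrightarrow{P}0$. The main technical obstacle I anticipate is the equicontinuity estimate itself, since it requires careful bookkeeping of the weight $t^a(-\log t)^{5/2}$ and its $t$-derivative near the boundary points $t=0$ and $t=1$; once the uniform bound and the modulus of continuity are in hand, the rest is a routine packaging of V-statistic theory together with the finite-net compactness step.
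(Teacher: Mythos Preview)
Your proposal is correct and takes a genuinely different route from the paper. The paper computes the same kernel $\widetilde\Psi$, observes that $E\widetilde\Psi=0$ under $H_0$, and then invokes a monotonicity device: it partitions $[0,1]$ into the (random) sets where the V-statistic $f_n(t)$ is non-decreasing and non-increasing, and applies a Glivenko--Cantelli-type lemma from Novoa-Mu\~noz and Jim\'enez-Gamero on each piece to upgrade pointwise convergence to uniform convergence. Your argument instead proceeds through uniform equicontinuity of the kernel family plus a finite-net compactness step. Both yield the same conclusion; yours is more self-contained (no external lemma) and avoids the delicacy that the paper's monotonicity regions depend on the sample, whereas the paper's path is shorter once that lemma is accepted.

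One word of caution on your sketch: the global Lipschitz-type modulus $|\widetilde\Psi(t)-\widetilde\Psi(s)|\le\omega_a(|t-s|)$ cannot come directly from a uniform bound on $\partial_t\widetilde\Psi$ over all of $[0,1]$, because with $a=1$ the term $t^{a-1}(-\log t)^{1/2}$ that arises from $Z^2e^{-uZ}\le 4(eu)^{-2}$ is unbounded near $t=0$. You have, however, already set up the fix: your uniform envelope $\tfrac{2}{e}t^a(-\log t)^{3/2}$ vanishes at both endpoints, so equicontinuity follows by combining uniform smallness on $[0,\delta_0]\cup[1-\delta_0,1]$ with a uniform Lipschitz bound on the compact interior $[\delta_0,1-\delta_0]$. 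With that adjustment the finite-net argument goes through exactly as you describe.
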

\begin{proof}
Assume $a\geq 1$. Let's focus on the derivative of the kernel. The following holds:
\begin{align*}
    &\frac{\partial \Psi (X_1, X_2; t, a, \nu)}{\partial \nu}\Bigr\vert _{\nu=1}=\frac{t^a (-\log (t))^{5/2} \big(2 X_1 t^{X_1}+2 X_2 t^{X_2}-(X_1+X_2) t^{\frac{X_1+X_2}{4}}\big)}{4}.
 \end{align*}
 {
 Denote with 
 \begin{align*}
    f_n(t)=&\frac{1}{n^2}\sum\limits_{i, j}\frac{\partial \Psi (X_1, X_2; t, a, \nu)}{\partial \nu}\Bigr\vert _{\nu=1}=\frac{1}{n^2}\sum\limits_{i, j}\frac{t^a (-\log (t))^{5/2} \big(2 X_1 t^{X_1}+2 X_2 t^{X_2}-(X_1+X_2) t^{\frac{X_1+X_2}{4}}\big)}{4}.
 \end{align*}
 Function $f_n$ is continuous (as a function of $t$) and has a continuous derivative. Moreover, for every $t\in [0,1]$ we have that $Ef_n(t)=0.$}
% Function $r(t, a)=\frac{t^a (-\log (t))^{5/2}}{4}$ is bounded on $[0, 1]$ because it's a continuous function on a compact set. It can be limited from above by $R(a)=\sup\limits_{t\in [0, 1]} r(t, a)$. Function
% \begin{align*}
%     s(X_1, X_2; t)&=2 X_1 t^{X_1}+2 X_2 t^{X_2}-(X_1+X_2) t^{\frac{X_1+X_2}{4}}%\\
%     %&=2 X_1 t^{X_1}+2 X_2 t^{X_2}-E(2 X_1 t^{X_1}+2 X_2 t^{X_2})-(X_1+X_2) t^{\frac{X_1+X_2}{4}}+E((X_1+X_2) t^{\frac{X_1+X_2}{4}})
% \end{align*}
% is a smooth function (as a function of $t$). Note that $E s(X_1, X_2; t)=0$ and that $s(X_1, X_2; t)$ is a difference of two monotonic functions, namely
% {\begin{align*}
%     s_1(X_1, X_2; t)= 2X_1 t^{X_1}+2X_2 t^{X_2}\text{   and }\\
%     s_2(X_1, X_2; t)=(X_1+X_2) t^{\frac{X_1+X_2}{4}}.
% \end{align*}}
{
Denote with 
$S_{ND}=\{t: f_n'(t)\geq 0\}$, $S_{NI}=\{t:f_n'(t)\leq 0\}$ the sets on which $f_n$ is non-decreasing and non-increasing respectively. The continuity of $f'_n(t)$ ensures that both sets are closed subsets of the compact set $[0,1]$. Therefore, $S_{ND}$ and $S_{NI}$ are compact. Note that $S_{ND}\cup S_{NI}=[0,1]$.

From the subadditivity of the supremum, we have that
\begin{align*}
\sup\limits_{t\in[0,1]}\vert  f_n(t)\vert  &\leq \sup\limits_{t\in S_{ND}}\vert  f_n(t)\vert+\sup\limits_{t\in S_{NI}}\vert  f_n(t)\vert  
\end{align*}

From the law of large numbers and continuity of the modulus, we have that for every $t\in[0, 1]$: 
\begin{align*}
|f_n(t)-Ef_n(t)|=|f_n(t)|\xrightarrow{P}0.
\end{align*}
Function $f_n(t)$ is non-decreasing for every $t\in S_{ND}$. 
By applying Lemma 1 from \cite{novoa2014testing}, we obtain \begin{align*}
     \sup\limits_{t\in S_{ND}}|f_n(t)|\xrightarrow{P}0.
 \end{align*}
 Similarly, function $f_n(t)$ is non-increasing for every $t\in S_{NI}$. 
By applying Lemma 1 from \cite{novoa2014testing}, we obtain \begin{align*}
     \sup\limits_{t\in S_{ND}}|f_n(t)|\xrightarrow{P}0.
 \end{align*}

%By the triangle inequality, we have that
% \begin{align*}
%     \Big\vert  \frac{1}{n^2}\sum\limits_{i, j} \frac{\partial \Psi (X_i, X_j; t, a, \nu)}{\partial \nu}\bigr\vert _{\nu=1}\Big\vert  &\leq R(a)\Big |\frac{1}{n^2}\sum\limits_{i, j} s_1(X_i, X_j; t)-{Es_1(X_i, X_j; t)}\Big |\\&+R(a)\Big |\frac{1}{n^2}\sum\limits_{i, j} s_2(X_i, X_j; t){-Es_2(X_i, X_j; t)}\Big|\\&=R(a)|f_{n,1}-Es_1(X_1, X_2;t)|+R(a)|f_{n, 2}-Es_2(X_1, X_2;t)|.
% \end{align*}

% Functions $s_r(X_1, X_2; t)$ are increasing  for $r=1, 2$  on $[0, 1]$. { Therefore, we conclude $f_{n, 1}, f_{n, 2}$ are increasing functions. From the law of large numbers}, we have that for every $t\in[0, 1]$ and $r=1, 2$ 
% \begin{align*}
%     \Big|\frac{1}{n^2}\sum\limits_{i, j} s_r(X_i, X_j; t)-{Es_r(X_1, X_2; t)}\Big|\xrightarrow{a.s.}0.
% \end{align*}
% { 
% Applying Lemma 1 from \cite{novoa2014testing} twice, we obtain \begin{align*}
%     \sup\limits_{t\in[0,1]}\Big|\frac{1}{n^2}\sum\limits_{i, j} s_r(X_i, X_j; t)-{Es_r(X_1, X_2; t)}\Big|\xrightarrow{a.s.}0.
% \end{align*}
% We have that
% \begin{align*}
%     \sup_{t\in[0, 1]}\Big\vert \dfrac{\partial V_n(t;a, \nu)}{\partial\nu}\bigr\vert _{\nu=1}\Big\vert \leq  R(a)\sup\limits_{t\in[0,1]}|f_{n,1}-Es_1(X_1, X_2;t)|+R(a)\sup\limits_{t\in[0,1]}|f_{n, 2}-Es_2(X_1, X_2;t)|\to 0.
% \end{align*} 
% }
Therefore, $\sup\limits_{t\in [0, 1]} \Big\vert \frac{\partial V_n(t;a, \nu)}{\partial\nu}\bigr\vert _{\nu=1}\Big\vert \leq  \sup\limits_{t\in S_{ND}}\vert  f_n(t)\vert+\sup\limits_{t\in S_{NI}}\vert  f_n(t)\vert   \to 0$. 
The result then follows. }
\end{proof}}

Using Lemma \ref{lemma1} and using Slutsky's theorem, we conclude that { under $H_0$} $\sqrt{n}V_n(t; a, \hat\lambda)$ and $\sqrt{n}V_n(t; a, 1)$ are asymptotically equally distributed.

The distribution of $\sqrt{n}V_n(t; a, 1)$ can be obtained from Hoeffding theorem for non-degenerate U- (V-) statistics (see, e.g., \cite{korolyuk2013theory}). 

 The first projection of the kernel $\Psi(\cdot; t,a)$ is given by
\begin{align*}
    \psi(x; t, a)&=E(\Psi(X_1, X_2; t, a\vert X_1=x))=\frac{1}{2} t^a (-\log (t))^{3/2} \Big(-2 t^{\frac{x}{4}} e^{-\frac{\sqrt{-\log (t)}}{\sqrt{2}}}+t^x+e^{-\sqrt{2} \sqrt{-\log (t)}}\Big).
\end{align*}
This function is obviously non-constant. In addition, it can be shown that  $ E\psi(X; t, a)^2<\infty$ for every $t\in [0, 1]$. 
Hence $V_n(t; a, 1)$ is non-degenerate.
Therefore,  from the Hoeffding theorem  and the multivariate central limit theorem, it follows that the finite-dimensional asymptotic distributions of $\sqrt{n} V_n(t; a, 1)$ are normal.  Hence,  it suffices to show that the sequence $\{\sqrt{n}V_n(t; a, 1)\}$ is tight. {For the sake of brevity, we { will} denote $V_n(t; a, 1)$ with $V_n(t;a)$ and we will drop the argument for $\lambda$ in the following text whenever it is equal to 1.} {The { tightness} then follows from \cite[Theorem 12.3]{billingsley2013convergence}.}

Let us denote
$$V_n(t; a)=\frac{1}{n^2}\sum\limits_{i, j}\Psi  (X_i, X_j; t, a), $$
where $\Psi $ denotes the symmetric kernel of the $V$-statistic.

To show tightness, we observe that
$$V_n(t+u;a)-V_n(t; a)=\frac{1}{n^2}\sum\limits_{i, j}\Big(\Psi (X_i, X_j; t+u,a)-\Psi (X_i, X_j; t,a)\Big).$$
Therefore
\begin{align*}
&E\Big(\sqrt{n}V_n(t+u;a)-\sqrt{n}V_n(t; a)\Big)^2\\
&=E\Big(\frac{1}{n^3}\sum\limits_{i, j, k, l}(\Psi (X_i, X_j; t+u, a)-\Psi (X_i, X_j; t, a))(\Psi (X_k, X_l; t+u, a)-\Psi (X_k, X_l; t, a))\Big).
\end{align*}
{Several different cases occur:
\begin{itemize}
    \item If the indices $i, j, k, l$ are different, the independence of the random variables $X_i, X_k, X_j, X_l$ and the characterization give us
    \begin{align*}
   E\big(&\frac{1}{n^3}\sum\limits_{i\neq j\neq k\neq l}(\Psi (X_i, X_j; t+u, a)-\Psi (X_i, X_j; t, a))(\Psi (X_k, X_l; t+u, a)-\Psi (X_k, X_l; t, a))\big)=0,
   \end{align*}
   and we have $4!\binom{n}{4}$ such cases.
    \item If three out of four indices are different, then two cases can occur:
    \begin{enumerate}
      \item If $i=j$, then the independence and characterization give us  
      \begin{align*}
          E\big(\frac{1}{n^3}\sum\limits_{i\neq k\neq l}(\Psi (X_i, X_i; t+u, a)-\Psi (X_i, X_i; t, a))
          (\Psi (X_k, X_l; t+u, a)-\Psi (X_k, X_l; t, a))\big)=0,
          \end{align*}
      and we have $8n\binom{n-1}{2}=4n(n-1)(n-2)$ such cases.
      \item If without loss of generality $i=k$, then we have that
      \begin{align*}
      &E\big(\frac{1}{n^3}\sum\limits_{i\neq j\neq l}(\Psi (X_i, X_j; t+u, a)-\Psi (X_i, X_j; t, a))
      &(\Psi (X_i, X_l; t+u, a)-\Psi (X_i, X_l; t, a))\big)\neq 0.
      \end{align*}
Since we have $8n\binom{n-1}{2}=4n(n-1)(n-2)$ such cases, the sum above reduces to
    \begin{align*}
    &\frac{4n(n-1)(n-2)}{n^3}E\big((\Psi (X_1, X_2; t+u, a)-\Psi (X_1, X_2; t, a))(\Psi (X_1, X_3; t+u, a)-\Psi (X_1, X_3; t, a))\big)\\
    &=\frac{4(n-1)(n-2)}{n^2}E(\psi(X_1; t+u, a)-\psi(X_1; t, a))^2,
    \end{align*}
    where $\psi(X; t, a)$ denotes the first projection of the kernel $\Psi $.
\end{enumerate}
    \item If two out of four indices are different, then we have three different cases.
    \begin{enumerate}
        \item If $i=k$ and $j=l$, we have that
        \begin{align*}
E\big(\frac{1}{n^3}\sum\limits_{i\neq j}(\Psi (X_i, X_j; t+u, a)-\Psi (X_i, X_j; t, a))^2\big),
        \end{align*}
        
       and since it reduces to 
     \begin{align*}
     &\frac{4}{n^3}\binom{n}{2}E(\Psi (X_1, X_2; t+u, a)-\Psi (X_1, X_2; t, a))^2=\\&\frac{2(n-1)}{n^2}E(\Psi (X_1, X_2; t+u, a)-\Psi (X_1, X_2; t, a))^2\overset{n\to\infty}{\longrightarrow}0, 
     \end{align*}
    we conclude that this part is asymptotically negligible.
   \item If $i=j$ and $k=l$, we have that
    \begin{align*}
E\big(&\frac{1}{n^3}\sum\limits_{i\neq j}(\Psi (X_i, X_i; t+u, a)-\Psi (X_i, X_i; t, a))(\Psi (X_j, X_j; t+u, a)-\Psi (X_j, X_j; t, a))\big)\\
&=\frac{2n(n-1)}{n^3}E(\Psi (X_1, X_1; t+u, a)-\Psi (X_1, X_1; t, a))^2,
    \end{align*}
    and since
    $$E(\Psi (X_1, X_1; t+u, a)-\Psi (X_1, X_1; t, a))^2<\infty, $$
    the asymptotic negligence follows.
    \item If $i=k=l\neq j$, then we have that the independence and characterization give us
    \begin{align*}
E\big(&\frac{1}{n^3}\sum\limits_{i\neq j}(\Psi (X_i, X_i; t+u, a)-\Psi (X_i, X_i; t, a))(\Psi (X_i, X_j; t+u, a)-\Psi (X_i, X_j; t, a))\big)=0.
\end{align*}
and we have $2n(n-1)$ such cases.
    \end{enumerate}
    \item The only remaining possibility is that all indices coincide. We have that 
    \begin{align*}
&E\big(\frac{1}{n^3}\sum\limits_{i}(\Psi (X_i, X_i; t+u, a)-\Psi (X_i, X_i; t, a)\big)^2=E\big(\frac{1}{n^2}(\Psi (X_i, X_i; t+u, a)-\Psi (X_i, X_i; t, a)\big)^2, 
\end{align*}
and since
    $$E(\Psi (X_1, X_1; t+u, a)-\Psi (X_1, X_1; t, a))^2<\infty, $$ 
the asymptotic negligence follows.
\end{itemize}}

We have that as $n\to\infty$ 
\begin{align*}
H(u)=&E\Big(\sqrt{n}V_n(t+u; a)-\sqrt{n}V_n(t; a)\Big)^2\to4E(\psi(X_1; t+u, a)-\psi(X_1; t, a))^2.
\end{align*}

Exploiting the mean-value theorem gives us 
$$E(\psi(X_1; t+u, a)-\psi(X_1; t, a))^2\leq E\Big (\frac{d\psi(X_1; t_1, a)}{dt}\Big)^2u^2,$$
for $t_1\in [t, t+u]$. We have that $E\Big(\frac{d\psi(X_1; t_1, a)}{dt}\Big)^2$ is continuous function
for $t\in [0, 1]$ and $a\geq 1$.
Taking into account that the set $[0, 1]$ is compact, we have that the following quantity $$C=\max\limits_{t\in [0, 1]} E\Big(\frac{d\psi(X_1; t, a)}{dt}\Big)^2$$
exists and is finite.
 { Therefore, tightness follows from $H(u)\leq Cu^2.$}

%The proof of the theorem then follows.

{{ We have established that $\sqrt{n}J_{n, a}\xrightarrow{D} \sup\limits_{t\in[0,1]} |\xi(t)|$, where $\{\xi(t)\}$ is  the centred Gaussian process, whose   covariance function can be } obtained from the following:
\begin{align}\label{kovarijaciona}
    K(s, t)&=\frac{s^a t^a}{4} (-\log (s))^{3/2} (-\log (t))^{3/2} E(\Psi(X, Y; t, a)\Psi(X, Z; s, a))=\frac{s^a t^a}{4} (-\log (s))^{3/2} (-\log (t))^{3/2}\times\\ &\int\limits_\mathbb{R^+}\int\limits_\mathbb{R^+}\int\limits_\mathbb{R^+}\frac{\big(-2 s^{\frac{x+z}{4}}+s^z+s^x\big) \big(-2 t^{\frac{x+y}{4}}+t^y+t^x\big) e^{\frac{1}{2} \big(-\frac{1}{x}-\frac{y+z}{y z}\big)}}{8 \sqrt{2} \pi ^{3/2} x^{3/2} y^{3/2} z^{3/2}}dxdydz.\nonumber
\end{align}
The result of the computation above is stated in the Theorem \ref{asimptotikaJ}.}
\end{proof}

% Note that Lemma 1 can be generalized in the following manner:
% \begin{theorem}
% Let $V_n(t;a, \lambda)$ be a V-statistic with a symmetric kernel $\Psi(X_1, X_2; \dots, X_m; t, a \lambda)$ and let the kernel be differentiable w.r.t. $\lambda\in [c, d], -\infty<c<d<\infty.$ { Assume $t\in [a, b], -\infty\leq a<b\leq\infty.$} {If { $\nu$ is fixed and}
% \begin{align*}
% \frac{\partial \Psi(X_1, X_2, \dots, X_m; t, a, \nu)}{\partial \nu}=g(t, a)\sum\limits_{i=1}^k y_i(X_1, X_2, \dots, X_m; t, \nu)\text{ and}\\
%     \Big\vert  \frac{\partial \Psi(X_1, X_2, \dots, X_m; t, a, \nu)}{\partial \nu}\Big\vert \leq C(a)\abs{\sum\limits_{i=1}^k y_i(X_1, X_2, \dots, X_m; t, \nu)},
% \end{align*}
% where each $y_i$ is a monotonic function of $t$, converging to its expectation as $n\to\infty$, and $C(a)=\sup_{t\in[a,b]}\abs{g(t, a)}>0$. Then the following holds: 
% \begin{align*}
%     \sup_{t\in[a,b]}\Big\vert \frac{\partial V_n(t;a,\nu)}{\partial \nu}-E\big(\frac{\partial \Psi(X_1, X_2, \dots, X_m; t, a, \nu)}{\partial \nu}\big)\Big\vert =o_P(1).
% \end{align*}
% }
% \end{theorem}

\begin{proof}[Proof of Theorem \ref{asimptotikaR}]

{Analogously as before, the statistic \eqref{statRn} can be represented as 
\begin{equation*}
    R_{n, a}=V^R_n(a,  \hat{\lambda}),
\end{equation*}where, for each $t$ and $a$, 
\begin{equation*}\label{vstatR}
    V^R_n(a, \hat\lambda)=\frac{1}{n^2}\sum\limits_{i_1, i_2}^nZ(X_{i_1}, X_{i_2};  a, \hat\lambda),
\end{equation*}
 is a $V$-statistic with the estimated parameter $\lambda$, with symmetric kernel $Z(\cdot;a, \hat\lambda)$.
 Applying Taylor expansion as in (\ref{razbijanje}) and using (\ref{normalnost_lambda}), we conclude that
the normality of $\sqrt{n}V^R_n(a, \hat\lambda)$ can be obtained from the asymptotic normality of $\sqrt{n}V^R_n(a)$ applying Hoeffding theorem for non-degenerate U- (V-) statistics (see, e.g., \cite{korolyuk2013theory}).

 The first projection of the kernel $Z(\cdot;a):=Z(\cdot;a, 1)$ is given by: 
\begin{align}\label{zeta}
&\zeta(x; a)=E(Z(X_1, X_2;a)\vert X_1=x)
=-\frac{\sqrt{\pi } (3 a (a+2)+1) e^{\frac{1}{2 a}} \erf\Big(\frac{1}{\sqrt{2 a}}\Big)-\sqrt{2} \sqrt{a} (5 a+1)}{16 a^{9/2}}\\
&+\frac{ \Big(8 \pi  e^{\frac{1}{8 a+2 x}} \Big(48 a^2+24 a (x+1)+3 x (x+2)+1\Big) \erf\Big(\frac{1}{\sqrt{2( 4 a+x)}}\Big)}{(4 a+x)^{9/2}}-\frac{8\sqrt{2\pi} (20 a+5 x+1)\Big)}{(4 a+x)^4}  -\frac{3\sqrt{\pi}}{8 (a+x)^{5/2}}\nonumber.
\end{align}

\color{black}
This function is obviously non-constant. In addition, it can be shown that  $ E\zeta(X;  a)^2<\infty$ for every $t\in [0, 1]$. 
Hence $V^R_n(a, 1)$ is non-degenerate.
Analogously as before, utilizing the Hoeffding theorem  and the multivariate central limit theorem, we obtain that the limiting distribution of $\sqrt{n}V_n^R(a, \hat{\lambda})$ is normal $\mathcal{N}(0, \sigma^2_R{(a)})$, where $\sigma^2_R(a)=4E\zeta(X;  a)^2$}. 
\end{proof}
{\color{black}\begin{proof}[Proof of Theorem \ref{KL}]

It can be easily shown that the minimum of $K(\theta; \lambda)$, as a function of $\lambda$, is attained for

\begin{equation*}
    \lambda_0=\Big(\int\limits_{\mathbb{R^+}}\frac{g(x; \theta)}{x}dx\Big)^{-1}.
\end{equation*}

Therefore
\begin{equation*}
    K(\theta)=\int_{\mathbb{R^+}}\Big(\log\left(\dfrac{g(x;\theta)}{f(x;\lambda_0)}\right)g(x;\theta)\Big)dx.
\end{equation*}

\begin{align*}
    K(\theta)&=\int\limits_{\mathbb{R^+}}\log (g(x; \theta)) g(x; \theta)dx+\log\sqrt{2\pi}+\frac{1}{2}\log\Big(\int\limits_{\mathbb{R^+}}\frac{g(x; \theta)}{x}dx\Big)+\frac{1}{2}+\frac{3}{2}\int\limits_{\mathbb{R^+}}\log x g(x; \theta)dx.
\end{align*}
Under certain regularity conditions \cite{nikitinMetron}, the following holds:
\begin{equation*}
    K'(\theta)=\int\limits_{\mathbb{R^+}}\log(g(x; \theta))g'(x;\theta)dx+\dfrac{\int\limits_{\mathbb{R^+}}\frac{g'(x;\theta)}{x}dx}{2\int\limits_{\mathbb{R^+}}\frac{g(x; \theta)}{x}dx}+\frac{3}{2}\int\limits_{\mathbb{R^+}}\log x g'(x;\theta)dx.
\end{equation*}
Noting that $$g(x;0)=\sqrt{\frac{ \lambda}{2\pi}}e^{-\frac{ \lambda}{2x}}x^{-\frac{3}{2}},$$ where $\lambda>0$ is a parameter, it can be shown that $K'(0)=0.$

The second derivative of $K(\theta)$ is equal to 
\begin{align*}
  K''(\theta)&=\int\limits_{\mathbb{R^+}}\frac{(g'(x;\theta))^2}{g(x; \theta)}dx+\int\limits_{\mathbb{R^+}} \log(g(x; \theta))g''(x;\theta)dx\\&+
    \dfrac{\big(\int\limits_{\mathbb{R^+}}\frac{g''(x;\theta)}{x}dx\big)\big(\int\limits_{\mathbb{R^+}}\frac{g(x; \theta)}{x}dx\big)-\big(\int\limits_{\mathbb{R^+}}\frac{g'(x;\theta)}{x}dx\big)^2}{2\big(\int\limits_{\mathbb{R^+}}\frac{g(x; \theta)}{x}dx\big)^2}+\frac{3}{2}\int\limits_{\mathbb{R^+}}\log(x)g''(x;\theta)dx.
\end{align*}
The straightforward computation leads us to the following:
\begin{equation*}
    K''(0)=\sqrt{\frac{2\pi}{\lambda}}\int\limits_{\mathbb{R^+}} (g'(x;\theta))^2e^{-\frac{\lambda}{2x}}x^{-\frac{3}{2}}dx-\frac{\lambda}{2}\Big(\int\limits_{\mathbb{R^+}}\frac{g'(x;\theta)}{x}dx\Big)^2,
\end{equation*}
and the conclusion follows from the Maclaurin expansion of $K(\theta)$.
\end{proof}}

\begin{proof}[Proof of Theorem \ref{Tsup}.]
{We give just a broad outline of the proof.} The tail behaviour of $\sup_{t\in [0, 1]}\abs{\xi(t)}$ is equal to the inverse of supremum of its covariance function \cite{marcus1972sample}. {Having $\hat\lambda\xrightarrow{P}\lambda(\theta)$}, the law of large numbers for V-statistics for estimated parameters \cite{iverson1989effects} gives us that $V_n(t; a,  \hat{\lambda})$ converges to $A_J(\theta; t)=E_\theta(\Psi(X_{1}, X_{2}; t, a, \lambda{(\theta)}))$. %{ %Having certain regularity conditions hold (\cite{nikitinMetron}), by} 
By expanding into the Maclaurin series, we get
\begin{align*}
    A_J(\theta;t)=A_J(0;t)+\theta A_J'(\theta;t)+\frac{\theta^2}{2}A_J''(0;t)+o(\theta^2).
\end{align*}
It can be shown that $A_J(0;t)=b_J'(0)=0$. The direct calculation yields
\begin{align*}
    A_J''(0;t)=2\int_{\mathbb{R}^+}\psi(x; t, a)g'_\theta(x; 0)dx.
\end{align*}
{ Using the same arguments as in the  proof of Theorem \ref{asimptotikaJ},} it can be shown that the limit in probability of $J_{n, a}$ under the alternative equals to $\sup_{t\in[0,1]} \vert A_J(\theta;t)\vert =b_J(\theta)$, which finishes the proof.
\end{proof}
\begin{proof}[Proof of Theorem \ref{Rint}.]
The theorem can be shown analogously  as in \cite[Lemma 2.1]{milovsevic2016new}.  Hence we omit it here.
\end{proof}
{

\section*{Appendix B -- The generalization of the Bhati--Kattumanil statistic}\label{generalizacije}
%In this section we  present some of the possible generalisations {of Bhati--Kattumanil statistic.}

One possible way to generalize the test statistic $T_n$ is to opt for the difference between U-empirical distribution functions of $\omega_1(X_1,X_2)=\frac{aX_1+bX_2}{\big(\sqrt{a}+\sqrt{b}\big)^2}$ and $\omega_2(X_1)=X_1$,  given by
\begin{align*}
   & G_n(t)=\frac{1}{n(n-1)}\sum_{i<j}{\rm I}\Big\{\frac{aX_i+bX_j}{(\sqrt{a}+\sqrt{b})^2}\leq t\Big\} \text{ and }\\
   & F_n(t)=\frac{1}{n}\sum_{i}{\rm I}\big\{X_i\leq t\big\}
\end{align*}
respectively. 
This leads us to the statistic
\begin{align}\label{testIn}
    \bar{I}^{[a,b]}_n=\int_{\mathbb{R}^+}(G_n(t)-F_n(t))dF_n(t).
    \end{align}

It is easy to show that the statistic is scale free under { the} null hypothesis. {This follows from the fact that if $X$ has the L\'evy distribution with the scale parameter $\lambda$, then $\frac{X}{\lambda}$ has the standard L\'evy distribution.} Therefore, while deriving asymptotic null properties we may assume that the sample comes from the standard L\'evy distribution.
Applying the same arguments as in \cite{bhati2020jackknife}, we come to the following statement about the limiting distribution of $ \bar{I}^{[a,b]}_n$ under $H_0$.
\begin{theorem} Let $X_1,X_2,...,X_n$ be an i.i.d. sample.
Under  $\mathcal{H}_0$ the limiting distribution of $\bar{I}^{[a,b]}_n$ is centred Gaussian, i.e. it holds
$$\sqrt{n} \bar{I}^{[a,b]}_n\xrightarrow{D}\mathcal{N}\big(0, \sigma_0^2(a,b)\big)$$
where
\begin{align}\label{dispab}
\sigma^2_0\big(a,b\big)&=Var\Big(2-P\Big(\frac{aX_1+bX_2}{(\sqrt{a}+\sqrt{b})^2}\geq X_3\vert X_1\Big)-P\Big(\frac{aX_2+bX_1}{(\sqrt{a}+\sqrt{b})^2}\geq X_3\vert X_1\Big)+P\big(X_2\leq X_1\vert X_1\big)\Big).
\end{align}
\end{theorem}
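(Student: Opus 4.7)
The plan is to adapt the argument of \cite{bhati2020jackknife} from the case $a=b=1$ to general $a,b>0$, substituting the kernel $\omega_1(x,y)=(ax+by)/(\sqrt a+\sqrt b)^2$ for $(x+y)/4$. First, both indicators ${\rm I}\{\omega_1(X_i,X_j)\leq X_k\}$ and ${\rm I}\{X_i\leq X_k\}$ are invariant under simultaneous positive scaling of their arguments, so $\bar I_n^{[a,b]}$ takes the same value when computed on $(X_i)$ and on $(X_i/\lambda)$. Hence under $H_0$ its law is free of $\lambda$, and we may assume $\lambda=1$.

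Next I would expand
\[
\bar I_n^{[a,b]} = \frac{1}{n\binom{n}{2}}\sum_{k=1}^n\sum_{i<j}{\rm I}\{\omega_1(X_i,X_j)\leq X_k\} - \frac{1}{n^2}\sum_{i,k=1}^n{\rm I}\{X_i\leq X_k\}.
\]
The second sum equals $\tfrac12 + O(n^{-1})$ by counting ordered pairs and using the a.s.\ absence of ties. The first is a hybrid U-statistic; splitting according to whether $k\in\{i,j\}$ isolates an $O(n^{-1})$ diagonal contribution, which is $o_P(n^{-1/2})$ after $\sqrt n$-scaling. The off-diagonal remainder, upon averaging over the six permutations of the index triple, equals the proper U-statistic of degree three
\[
U_n^{[a,b]} = \binom{n}{3}^{-1}\sum_{i<j<k}\Phi(X_i,X_j,X_k),\quad \Phi(x_1,x_2,x_3)=\frac{1}{6}\sum_{\pi\in S_3}{\rm I}\{\omega_1(x_{\pi(1)},x_{\pi(2)})\leq x_{\pi(3)}\},
\]
so that $\sqrt n\,\bar I_n^{[a,b]} = \sqrt n\bigl(U_n^{[a,b]}-\tfrac12\bigr)+o_P(1)$.

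By Characterization \ref{karakterizacija}, $E\Phi(X_1,X_2,X_3)=P(\omega_1(X_1,X_2)\leq X_3)=P(X_1\leq X_3)=\tfrac12$ under $H_0$, so Hoeffding's CLT for non-degenerate U-statistics yields
\[
\sqrt n\bigl(U_n^{[a,b]}-\tfrac12\bigr) \xrightarrow{D} \mathcal N\bigl(0,\,9\,\mathrm{Var}(\phi(X_1))\bigr),\quad \phi(x)=E\bigl[\Phi(X_1,X_2,X_3)\mid X_1=x\bigr].
\]
To compute $\phi(x)$ I would group the six permutations according to the position occupied by $X_1$. The two with $X_1$ in the first slot contribute $2P(\omega_1(x,X_2)\leq X_3)$; the two in the second slot contribute $2P(\omega_1(X_2,x)\leq X_3)$; the two in the third slot contribute $2F(x)$, using the distributional identity $\omega_1(X_2,X_3)\stackrel{d}{=}X_3$ supplied by Characterization \ref{karakterizacija}. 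Hence
\[
3\phi(x) = P(\omega_1(x,X_2)\leq X_3) + P(\omega_1(X_2,x)\leq X_3) + F(x).
\]
Rewriting $P(\omega_1(\cdot,\cdot)\leq X_3)=1-P(\omega_1(\cdot,\cdot)\geq X_3)$ and discarding the resulting additive constants (which do not affect the variance) identifies $9\,\mathrm{Var}(\phi(X_1))$ with the expression \eqref{dispab} for $\sigma_0^2(a,b)$.

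The main obstacle I anticipate is the careful bookkeeping of the permutation sum in the projection, in particular the step that turns the two slot-3 permutations into $2F(x)$ via Characterization \ref{karakterizacija} — this is the only place where the L\'evy hypothesis enters non-trivially. The asymptotic equivalence of $\bar I_n^{[a,b]}$ with the proper U-statistic is a routine diagonal-term estimate, and non-degeneracy of $\Phi$ (immediate from $\phi$ being non-constant in $x$) licenses the direct application of Hoeffding's theorem.
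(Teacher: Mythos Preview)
Your argument is the one the paper has in mind---it merely says ``applying the same arguments as in \cite{bhati2020jackknife}'' and leaves the details to the reader---so the reduction to a degree-three U-statistic, the application of Hoeffding's CLT, and your projection computation identifying $9\,\mathrm{Var}(\phi(X_1))$ with \eqref{dispab} are all on target.

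One point deserves more care than you give it. For $a\neq b$ the map $\omega_1(x,y)=(ax+by)/(\sqrt a+\sqrt b)^2$ is not symmetric in $(x,y)$, so the off-diagonal part of your first sum, which runs over $i<j$, involves only the three permutations of each index triple in which the smaller of the first two indices occupies the first slot of $\omega_1$. ``Averaging over the six permutations'' therefore does not reproduce the same statistic; the discrepancy has a nonzero first projection (namely $\tfrac13\bigl(P(\omega_1(x,X_2)\le X_3)-P(\omega_1(X_2,x)\le X_3)\bigr)$) and is genuinely $O_P(n^{-1/2})$, not $o_P(n^{-1/2})$. The clean fix is to read the paper's $G_n$, whose normalization is $1/n(n-1)$, as a sum over all ordered pairs $i\neq j$ rather than over $i<j$; that definition makes the degree-two kernel symmetric in its two arguments from the outset, your fully symmetrized $\Phi$ is then exactly the kernel of the asymptotically equivalent U-statistic, and the variance formula \eqref{dispab}---which is visibly symmetric in the two slots of $\omega_1$---follows as you wrote. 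In the original Bhati--Kattumannil case $a=b$ the issue is invisible because $\omega_1$ is already symmetric.
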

The values of $\sigma^2_{0}(a,b)$ cannot be calculated analytically. However, it is possible to calculate them numerically. Other values of $\sigma^2_{0}(a,b)$ are presented in Table \ref{tab: disperzije}.
Therefore, instead of using jackknife approach, one can also  test using  standardized  statistic
\begin{align*}
     \widetilde{I}^{[a,b]}_n=\sqrt{n}\frac{ \bar{I}^{[a,b]}_n}{\sigma_{0}(a,b)},
\end{align*}
or calculate $p$-values based on $\bar{I}^{[a,b]}_n$ using Monte Carlo approach.  Both mentioned approaches are much simpler than original proposed in \cite{bhati2020jackknife}.

\begin{table}[htbp]
\centering
\caption{Some values of $\sigma^2_{{0}}(a,b)$}
\label{tab: disperzije}
\begin{tabular}{@{}llllllllllll@{}}
\toprule
a & b  & $\sigma^2_0$ & a  & b  & $\sigma^2_0$ & a & b  & $\sigma^2_0$ & a & b  & $\sigma^2_0$ \\ \midrule
1 & 2  & 0.022621     & 3  & 10 & 0.0209729    & 2 & 5  & 0.02199      & 5 & 10 & 0.022621     \\
1 & 3  & 0.0213695    & 4  & 5  & 0.0234113    & 2 & 6  & 0.0213695    & 6 & 7  & 0.0234603    \\
1 & 4  & 0.0202296    & 4  & 6  & 0.0231973    & 2 & 7  & 0.0207807    & 6 & 8  & 0.0233495    \\
1 & 5  & 0.0192384    & 4  & 7  & 0.0229236    & 2 & 8  & 0.0202296    & 6 & 9  & 0.0231973    \\
1 & 6  & 0.0183778    & 4  & 8  & 0.022621     & 2 & 9  & 0.0197162    & 6 & 10 & 0.0230191    \\
1 & 7  & 0.0176251    & 4  & 9  & 0.0223067    & 2 & 10 & 0.0192384    & 7 & 8  & 0.0234715    \\
1 & 8  & 0.0169606    & 4  & 10 & 0.02199      & 3 & 4  & 0.0233495    & 7 & 9  & 0.0233862    \\
1 & 9  & 0.0163687    & 5  & 6  & 0.0234425    & 3 & 5  & 0.0230191    & 7 & 10 & 0.0232665    \\
1 & 10 & 0.0158373    & 5  & 7  & 0.0232926    & 3 & 6  & 0.022621     & 8 & 9  & 0.023479     \\
2 & 3  & 0.0231973    & 5  & 8  & 0.0230928    & 3 & 7  & 0.022201     & 8 & 10 & 0.0234113    \\
2 & 4  & 0.022621     & 5  & 9  & 0.0228647    & 3 & 8  & 0.0217804    & 9 & 10 & 0.0234842    \\
3 & 9  & 0.0213695    & 10 & 10 & 0.0235051    &   &    &  &   &    &  \\ \bottomrule
\end{tabular}\end{table}
Analogously to Theorem \ref{Tint}, we formulate the result in the general case.
\begin{theorem}\label{Tintgen} For an alternative $g(x;\theta)$ from
$\mathcal{G}$, the Bahadur exact slope of the statistic $I_{n}^{[a,b]}$
is 
\begin{align*}
c_{I}(\theta)=\frac{1}{\sigma_0^{2}(a,b)}\Big(\int_{\mathbb{R}^+}\varphi(x)g'_{\theta}(x; 0)dx\Big)^{2}\cdot\theta^{2}+o(\theta^{2}),\theta\to 0,
\end{align*}
where $\varphi(x)$ is the first projection of the symmetric kernel $\Phi(\cdot)$ of V-statistic that is asymptotically equivalent to $I_n^{[a,b]}$, namely
\begin{align*}
   \varphi(x)=& \Big(2-P\Big(\frac{aX_1+bX_2}{(\sqrt{a}+\sqrt{b})^2}\geq X_3\vert X_1\Big)-P\Big(\frac{aX_2+bX_1}{(\sqrt{a}+\sqrt{b})^2}\geq X_3\vert X_1\Big)+P\big(X_2\leq X_1\vert X_1\big)\Big).
\end{align*}
\end{theorem}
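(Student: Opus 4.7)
The plan is to mirror the argument used for Theorem \ref{Tint} but for the generalized statistic, essentially applying \cite[Theorem 3]{nikitinMetron} to $\bar{I}^{[a,b]}_n$. The first step is to pass from the hybrid statistic to its non-degenerate U-statistic counterpart. Writing
\begin{equation*}
\bar{I}^{[a,b]}_n = \frac{1}{n^2(n-1)}\sum_{i<j}\sum_{k}\mathrm{I}\Big\{\tfrac{aX_i+bX_j}{(\sqrt{a}+\sqrt{b})^2}\le X_k\Big\}-\frac{1}{n^2}\sum_{i,k}\mathrm{I}\{X_i\le X_k\},
\end{equation*}
and extracting the diagonal terms, one shows exactly as in \cite{bhati2020jackknife} for the case $a=b=1$ that $\bar{I}^{[a,b]}_n$ is asymptotically equivalent to the non-degenerate U-statistic of degree three
\begin{equation*}
U_n^{[a,b]}=\frac{1}{\binom{n}{3}}\sum_{k<j<i}\Phi\big(X_i,X_j,X_k;a,b\big)-\tfrac12,
\end{equation*}
whose (symmetrized) kernel $\Phi$ is based on the indicator $\mathrm{I}\{(aX+bY)/(\sqrt{a}+\sqrt{b})^2\le Z\}$.

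Next, I would identify the first projection of $\Phi$. A direct conditional expectation computation, identical to the one behind \eqref{dispab}, yields that $E[\Phi(X_1,X_2,X_3;a,b)\mid X_1=x]$ equals, up to a centring constant, the function
\begin{equation*}
\varphi(x)=2-P\Big(\tfrac{aX_1+bX_2}{(\sqrt{a}+\sqrt{b})^2}\ge X_3\,\big|\,X_1=x\Big)-P\Big(\tfrac{aX_2+bX_1}{(\sqrt{a}+\sqrt{b})^2}\ge X_3\,\big|\,X_1=x\Big)+P(X_2\le x),
\end{equation*}
so that $\sigma_0^2(a,b)=\mathrm{Var}\,\varphi(X_1)$ is precisely the variance appearing in the theorem. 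Non-degeneracy (i.e.\ $\varphi$ non-constant and $E\varphi^2(X)<\infty$) is verified as in \cite{bhati2020jackknife} and is already implicit in the tabulated positive values of $\sigma_0^2(a,b)$.

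The third step is to analyse the behaviour of $U_n^{[a,b]}$ under the alternative $g(x;\theta)\in\mathcal{G}$. By the law of large numbers for U-statistics, $U_n^{[a,b]}\xrightarrow{P} b(\theta):=E_\theta\Phi(X_1,X_2,X_3;a,b)-\tfrac12$. Differentiating under the integral sign, which is permitted by the regularity conditions on $\mathcal{G}$, and using that Characterization \ref{karakterizacija} gives $b(0)=0$, one obtains the Maclaurin expansion
\begin{equation*}
b(\theta)=\Big(\int_{\mathbb{R}^+}\varphi(x)\,g'_\theta(x;0)\,dx\Big)\cdot\theta+o(\theta),\qquad \theta\to 0,
\end{equation*}
where the constant in front of the integral is dictated by the symmetry of $\Phi$ and is absorbed into the statement of the theorem (cf.\ the analogous computation in the proof of Theorem \ref{Tsup}).

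Finally, I would invoke \cite[Theorem 3]{nikitinMetron}, which gives the exact Bahadur slope of a non-degenerate U-statistic in terms of the square of its in-probability limit divided by (a multiple of) the first-projection variance. Substituting $b(\theta)$ and $\sigma_0^2(a,b)$ produces the announced expression for $c_I(\theta)$. The main technical obstacle is the justification of the asymptotic equivalence $\bar{I}^{[a,b]}_n \sim U_n^{[a,b]}$ and the uniform control needed to transport the large-deviation statement from $U_n^{[a,b]}$ back to $\bar{I}^{[a,b]}_n$; however, both steps are essentially the $(a,b)$-analogue of the arguments already carried out in \cite{bhati2020jackknife} for $a=b=1$, so no new ideas beyond bookkeeping should be required.
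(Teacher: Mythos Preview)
Your proposal is correct and follows essentially the same route as the paper: the paper offers no separate proof of Theorem \ref{Tintgen} beyond the remark that it is obtained ``analogously to Theorem \ref{Tint}'', whose proof in turn is the single line ``follows directly from \cite[Theorem 3]{nikitinMetron}''. Your outline simply unpacks that citation---passing to the asymptotically equivalent non-degenerate U-statistic, identifying the first projection $\varphi$, expanding the in-probability limit $b(\theta)$ at $\theta=0$, and then invoking the Nikitin--Peaucelle large-deviation result---so nothing differs in substance from the paper's (implicit) argument.
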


There is no significant difference between the statistic $I^{[1, 1]}$ and $I^{[a, b]}$ for different values of $a$ and $b$ with regard to the empirical powers against all of the alternatives mentioned in this paper and the local approximate Bahadur relative efficiencies, as can be seen in Table \ref{bahadur_I}.

\begin{table}[htbp]
\centering
\caption{Local approximate Bahadur relative efficiencies of $I^{[a, b]}$ with respect to LR test}
\label{bahadur_I}
\centering
\begin{tabular}{llllll} 
\toprule
     & $g_1^{[10]}$   & $g_2$   & $g_3^{[3]}$    & $g_4$  & $g_5$    \\ 
\hline
$I^{[1, 1]}$  & 0.59   & 0.54   & 0.73   & 0.53  & 0.41    \\
$I^{[2, 3]}$  & 0.59   & 0.54   & 0.73   & 0.53  & 0.41    \\
$I^{[5, 9]}$  & 0.58   & 0.54   & 0.73   & 0.53  & 0.41    \\
$I^{[9, 6]}$  & 0.59   & 0.54   & 0.73   & 0.53  & 0.41    \\
$I^{[10, 4]}$ & 0.57   & 0.53   & 0.72   & 0.52  & 0.40    \\
\bottomrule
\end{tabular}

\end{table}
}
\section*{Appendix C -- Real data}\label{RDATA}
In this section, the data used in Section \ref{sec:6} is given alongside with appropriate histograms. The theoretical L\'evy densities are drawn using the maximum likelihood estimate of the scale parameter $\lambda$.
\begin{figure}[ht!]
\centering
\includegraphics[width=8cm]{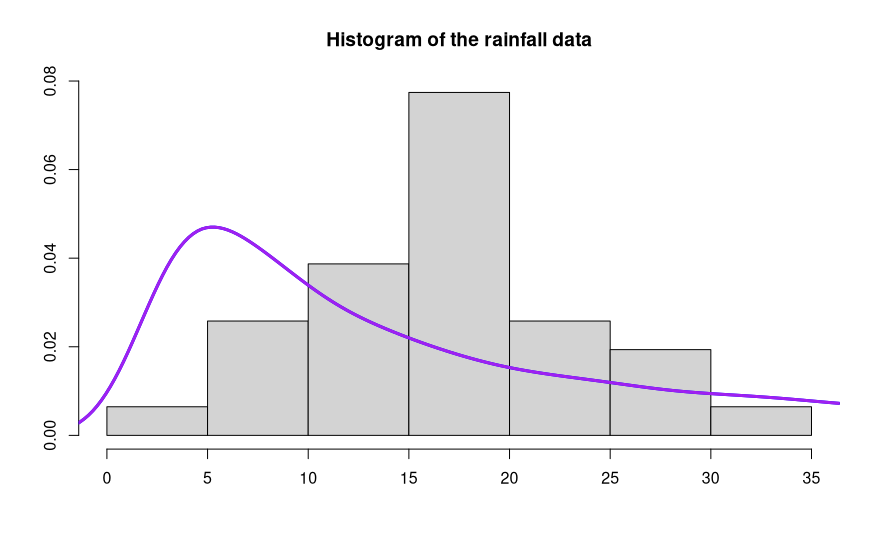}
\caption{Histogram of the data from Table \ref{kisa} and the appropriate L\'evy density. The purple line represents the L\'evy density with the scale parameter estimated by MLE ($\hat\lambda=11.82935$).
}
\label{fig: rain}
\end{figure}

\begin{table}[ht!] \centering
\caption{Weighted average of rainfall (in mm) data for India for the month of January}
\label{kisa}
\begin{tabular}{@{}llll@{}}
\toprule
Year & Rainfall & Year & Rainfall \\ \midrule
1981 & 29.3 & 1997 & 14.3 \\
1982 & 23.8 & 1998 & 16.4 \\
1983 & 18.5 & 1999 & 13.7 \\
1984 & 19 & 2000 & 18.4 \\
1985 & 23.2 & 2001 & 7.3 \\
1986 & 15.5 & 2002 & 15.7 \\
1987 & 13.2 & 2003 & 7.6 \\
1988 & 10.4 & 2004 & 25.7 \\
1989 & 15.4 & 2005 & 28.1 \\
1990 & 16 & 2006 & 17.7 \\
1991 & 14.3 & 2007 & 1.7 \\
1992 & 16 & 2008 & 18.4 \\
1993 & 18.2 & 2009 & 12 \\
1994 & 25 & 2010 & 7.5 \\
1995 & 31.3 & 2011 & 6.8 \\
1996 & 22.9 &  &  \\ \bottomrule
\end{tabular}
\end{table}

\newpage

\begin{figure}[ht!]
\centering
\includegraphics[width=8cm]{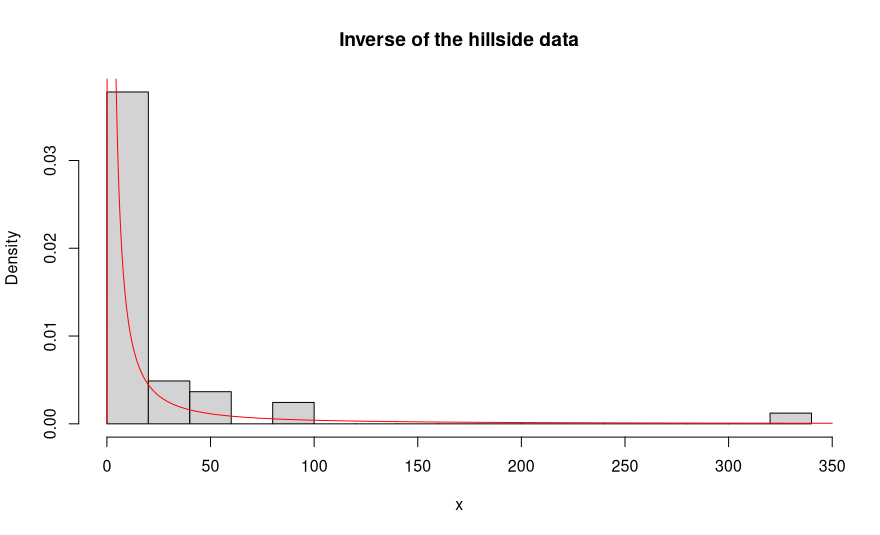}
\caption{Histogram of the inverse of the data from Table \ref{bunar} and the appropriate L\'evy density. The red line represents the L\'evy density with the scale parameter estimated by MLE ($\hat\lambda=1.052551$).
}
\label{fig: well}
\end{figure}
\begin{table}
\centering
\caption{Well yields (in gal/min/ft) based on Hillside location}
\label{bunar}
\begin{tabular}{llllllllll}
\toprule
 0.220 & 1.330 & 0.750 & 0.180 & 0.010 & 0.160 \\
 0.280 & 0.870 & 0.020 & 0.100 & 0.030 & 0.050 \\ 
 0.860 & 5.000 & 0.040 & 4.000 & 0.370 & 0.380 \\
 0.110 & 0.100 & 0.020 & 0.010 & 0.050 & 0.170 \\
 0.460 & 0.160 & 1.330 & 0.140 & 2.860 & 0.130 \\
 7.500 & 4.500 & 0.030 & 0.003 & 0.050 & 0.020 \\
 0.040 & 0.750 & 0.520 & 5.000 & 0.350 & \\
\bottomrule
\end{tabular}
\end{table}

{

\section*{Appendix D -- Median-based estimator}}\label{MBEST}
{We conduct the power study as in Section \ref{sec:5} when the median-based estimator is employed. Results are presented in Table \ref{est}.
 From Table \ref{est}, it can be seen that JEL and AJEL approaches, proposed in \cite{bhati2020jackknife}, are less powerful than classical, whenever the testing is utilized via the original version of  $\vert I^{[1,1]}\vert $. It can be concluded that novel tests are comparable with the tests $N_1^a$ and $N_1^b$ proposed in \cite{pitera2022goodness}. It is also notable that the new tests are comparable with EDF-based tests. In many cases, the new tests show better performance than the EDF-based tests. From Tables \ref{critvalJ05}, \ref{critvalJ5}, \ref{critvalL.5}, and \ref{critval5} seems that the estimation procedure doesn't significantly influence the distribution under the null hypothesis for larger sample sizes for the novel tests, which is in concordance with previously obtained theoretical results. The significant difference in test powers in Table \ref{zdruzene_tabele} and \ref{est} could be attributed to the difference in the behaviour of the estimates under the alternative distributions.

 Note that results analogous to Theorem \ref{asimptotikaJ} and \ref{asimptotikaR} could be established similarly to the MLE case. The consistency of $\hat{\lambda}_{MBE}$ will follow from \cite{mcculloch1986simple}.

The novel tests using the MBE can be applied to the real data examples from Section \ref{sec:6}. Results are presented in Table \ref{pval_median}.
 \begin{table}[htbp]
\caption{{$p$-values of novel tests - MB estimate}}
\centering
\label{pval_median}
\begin{tabular}{@{}llllll@{}}
\toprule
 & $R_{0.2}$ & $R_{0.5}$ & $R_{1}$ & $R_{2}$ & $R_{5}$ \\ \midrule
Rainfall & 0.3559 & 0.0314 & 0 & 0 & 0 \\
Hillside & 0.012 & 0.0058 & 0.042 & 0.4087 & 0.6572 \\ \midrule
 & $J_{1}$ & $J_{2}$ & $J_{5}$ & $J_{10}$ &  \\ \midrule
Rainfall & 0.026 & 0 & 0 & 0 &  \\
Hillside & 0.014 & 0.1364 & 0.7029 & 0.3086 &  \\ \bottomrule
\end{tabular}
\end{table}

From Figure \ref{fig: rain} presented in Appendix C, it can be deduced that the empirical density of the Rainfall data is, among the distributions studied in the simulation study, closest to LG(7, 2).
Since $R_{0,2}$ is the least powerful test against this alternative and all of the other tests report $p$-values smaller than 0.05, we can conclude that the L\'evy distribution is not a justified choice for the Rainfall data.

Analogously to the MLE case, $R_{0,2}$
%,R_{0,5}$ and $R_{1}$
is quite powerful against LL(1, 2). Therefore, we cannot conclude that the L\'evy distribution is the appropriate model for the Hillside data.
 }

\begin{landscape}
\begin{table}[htbp]
\caption{Comparison of empirical powers - { MB estimate}}
\label{est}
\begin{tabular}{lllllllllllllllllll}
\hline
Distribution & n & $\bar I^{[ 1, 1]}$ & $J_{1}$ & $J_{2}$ & $J_{5}$ & $J_{10}$ & $R_{0.2}$ & $R_{0.5}$ & $R_{1}$ & $R_{2}$ & $R_{5}$ & JEL & AJEL & KS & CVM & AD & $N_1^a$ & $N_1^b$ \\ \hline
L\'evy(0, 0.5) & 25 & 0.05 & 0.05 & 0.05 & 0.05 & 0.05 & 0.05 & 0.05 & 0.04 & 0.05 & 0.05 & 0.05 & 0.04 & 0.05 & 0.04 & 0.05 & 0.05 & 0.05 \\
L\'evy(0, 1) & 25 & 0.05 & 0.05 & 0.05 & 0.05 & 0.05 & 0.05 & 0.05 & 0.05 & 0.05 & 0.05 & 0.06 & 0.04 & 0.05 & 0.04 & 0.05 & 0.05 & 0.06 \\
L\'evy(0, 2) & 25 & 0.05 & 0.05 & 0.05 & 0.05 & 0.05 & 0.05 & 0.05 & 0.05 & 0.05 & 0.05 & 0.05 & 0.04 & 0.05 & 0.05 & 0.06 & 0.05 & 0.05 \\
Burr(1.5, 0.5, 0.5) & 25 & 0 & 0.99 & 0.95 & 0.84 & 0.77 & \textbf{1} & \textbf{1} & 0.99 & 0.97 & 0.87 & 0.18 & 0.14 & 0.83 & 0.86 & \textbf{1} & 0.19 & 0.44 \\
LN(0,1) & 25 & \textbf{0.99} & 0.36 & 0.83 & 0.97 & 0.99 & 0 & 0.10 & 0.70 & 0.94 & \textbf{0.99} & 0.25 & 0.21 & 0.60 & 0.78 & 0.40 & 0.30 & 0.36 \\
$\chi^2(3)$ & 25 & \textbf{1} & 0.50 & 0.94 & \textbf{1} & \textbf{1} & 0.07 & 0.13 & 0.74 & 0.98 & \textbf{1} & 0.41 & 0.36 & 0.94 & 0.95 & 0.74 & 0.58 & 0.57 \\
HN(0,1) & 25 & \textbf{1} & 0.41 & 0.91 & \textbf{1} & \textbf{1} & 0.28 & 0.10 & 0.55 & 0.95 & \textbf{1} & 0.40 & 0.35 & 0.97 & 0.96 & 0.83 & 0.73 & 0.61 \\
$\Gamma (3, 2)$ & 25 & \textbf{1} & 0.99 & \textbf{1} & \textbf{1} & \textbf{1} & 0 & 0.76 & \textbf{1} & \textbf{1} & \textbf{1} & 0.76 & 0.72 & \textbf{1} & \textbf{1} & \textbf{1} & 0.73 & 0.78 \\
W(2, 1) & 25 & \textbf{1} & 0.99 & \textbf{1} & \textbf{1} & \textbf{1} & 0 & 0.72 & \textbf{1} & \textbf{1} & \textbf{1} & 0.77 & 0.73 & \textbf{1} & \textbf{1} & \textbf{1} & 0.87 & 0.86 \\
$\Gamma(0.4, 2)$ & 25 & 0.15 & 0.89 & 0.50 & 0.15 & 0.27 & \textbf{0.98} & 0.94 & 0.71 & 0.21 & 0.12 & 0.06 & 0.05 & 0.38 & 0.40 & 0.95 & 0.21 & 0.03 \\
$W(0.4, 2)$ & 25 & 0.01 & 0.96 & 0.80 & 0.31 & 0.15 & \textbf{0.99} & 0.98 & 0.92 & 0.63 & 0.22 & 0.07 & 0.05 & 0.55 & 0.49 & 0.98 & 0.07 & 0.05 \\
$LN(0, 2)$ & 25 & 0.07 & 0.40 & 0.12 & 0.05 & 0.09 & \textbf{0.65} & 0.49 & 0.23 & 0.06 & 0.06 & 0.06 & 0.04 & 0.11 & 0.10 & 0.44 & 0.07 & 0.03 \\
$Chen(2, 0.4)$ & 25 & 0.28 & 0.77 & 0.32 & 0.19 & 0.40 & \textbf{0.94} & 0.85 & 0.48 & 0.11 & 0.22 & 0.07 & 0.05 & 0.29 & 0.34 & 0.89 & 0.25 & 0.05 \\
LG(7, 2) & 25 & \textbf{0.56} & 0.05 & 0.28 & 0.45 & 0.50 & 0.01 & 0.02 & 0.18 & 0.38 & 0.51 & 0.09 & 0.07 & 0.15 & 0.21 & 0.05 & 0.16 & 0.20 \\
LL(1, 2) & 25 & 0.22 & 0.19 & 0.06 & 0.17 & 0.24 & \textbf{0.52} & 0.29 & 0.08 & 0.07 & 0.18 & 0.06 & 0.05 & 0.06 & 0.08 & 0.33 & 0.10 & 0.09 \\
FR(1, 1) & 25 & \textbf{0.74} & 0.23 & 0.56 & 0.66 & 0.65 & 0 & 0.07 & 0.51 & 0.67 & 0.69 & 0.11 & 0.09 & 0.30 & 0.44 & 0.19 & 0.11 & 0.16 \\ \hline
L\'evy(0, 0.5) & 50 & 0.05 & 0.05 & 0.05 & 0.05 & 0.05 & 0.05 & 0.05 & 0.05 & 0.05 & 0.05 & 0.05 & 0.04 & 0.05 & 0.05 & 0.06 & 0.05 & 0.05 \\
L\'evy(0, 1) & 50 & 0.05 & 0.05 & 0.05 & 0.05 & 0.05 & 0.04 & 0.05 & 0.05 & 0.05 & 0.05 & 0.05 & 0.04 & 0.05 & 0.05 & 0.04 & 0.05 & 0.04 \\
L\'evy(0, 2) & 50 & 0.05 & 0.05 & 0.05 & 0.04 & 0.05 & 0.04 & 0.05 & 0.05 & 0.05 & 0.06 & 0.05 & 0.04 & 0.05 & 0.05 & 0.05 & 0.05 & 0.05 \\
Burr(1.5, 0.5, 0.5) & 50 & 0 & \textbf{1} & \textbf{1} & 0.99 & 0.98 & \textbf{1} & \textbf{1} & \textbf{1} & \textbf{1} & \textbf{1} & 0.29 & 0.26 & 0.99 & \textbf{1} & \textbf{1} & 0.27 & 0.72 \\
LN(0, 1) & 50 & \textbf{1} & 0.83 & 0.99 & \textbf{1} & \textbf{1} & 0 & 0.56 & 0.97 & \textbf{1} & \textbf{1} & 0.52 & 0.48 & 0.98 & \textbf{1} & 0.99 & 0.74 & 0.76 \\
$\chi^2(3)$ & 50 & \textbf{1} & 0.90 & \textbf{1} & \textbf{1} & \textbf{1} & 0.10 & 0.39 & 0.96 & \textbf{1} & \textbf{1} & 0.74 & 0.71 & \textbf{1} & \textbf{1} & \textbf{1} & 0.98 & 0.94 \\
HN(0,1) & 50 & \textbf{1} & 0.82 & \textbf{1} & \textbf{1} & \textbf{1} & 0.42 & 0.18 & 0.82 & \textbf{1} & \textbf{1} & 0.72 & 0.69 & \textbf{1} & \textbf{1} & \textbf{1} & \textbf{1} & 0.96 \\
$\Gamma (3, 2)$ & 50 & \textbf{1} & \textbf{1} & \textbf{1} & \textbf{1} & \textbf{1} & 0 & \textbf{1} & \textbf{1} & \textbf{1} & \textbf{1} & 0.99 & 0.98 & \textbf{1} & \textbf{1} & \textbf{1} & \textbf{1} & \textbf{1} \\
W(2, 1) & 50 & \textbf{1} & \textbf{1} & \textbf{1} & \textbf{1} & \textbf{1} & 0 & 0.98 & \textbf{1} & \textbf{1} & \textbf{1} & 0.99 & 0.99 & \textbf{1} & \textbf{1} & \textbf{1} & \textbf{1} & \textbf{1} \\
$\Gamma(0.4, 2)$ & 50 & 0.27 & 0.99 & 0.85 & 0.23 & 0.45 & \textbf{1} & \textbf{1} & 0.93 & 0.38 & 0.16 & 0.07 & 0.06 & 0.79 & 0.80 & \textbf{1} & 0.65 & 0.05 \\
$W(0.4, 2)$ & 50 & 0 & \textbf{1} & 0.98 & 0.62 & 0.25 & \textbf{1} & \textbf{1} & \textbf{1} & 0.92 & 0.40 & 0.07 & 0.06 & 0.91 & 0.88 & \textbf{1} & 0.18 & 0.04 \\
$LN(0, 2)$ & 50 & 0.08 & 0.67 & 0.26 & 0.05 & 0.11 & \textbf{0.90} & 0.77 & 0.45 & 0.10 & 0.06 & 0.05 & 0.04 & 0.20 & 0.22 & 0.72 & 0.16 & 0.04 \\
$Chen(2, 0.4)$ & 50 & 0.52 & 0.97 & 0.64 & 0.28 & 0.67 & \textbf{1} & 0.98 & 0.76 & 0.15 & 0.33 & 0.07 & 0.06 & 0.67 & 0.73 & \textbf{1} & 0.72 & 0.09 \\
LG(7, 2) & 50 & \textbf{0.9} & 0.17 & 0.51 & 0.73 & 0.80 & 0 & 0.07 & 0.39 & 0.67 & 0.82 & 0.16 & 0.14 & 0.32 & 0.48 & 0.29 & 0.37 & 0.46 \\
LL(1, 2) & 50 & 0.42 & 0.34 & 0.07 & 0.25 & 0.40 & \textbf{0.79} & 0.46 & 0.10 & 0.08 & 0.31 & 0.06 & 0.06 & 0.14 & 0.18 & 0.57 & 0.21 & 0.19 \\
FR(1, 1) & 50 & \textbf{0.97} & 0.68 & 0.88 & 0.92 & 0.91 & 0 & 0.55 & 0.88 & 0.94 & 0.94 & 0.21 & 0.19 & 0.63 & 0.83 & 0.74 & 0.22 & 0.35 \\ \hline
\end{tabular}
\end{table}
 \end{landscape}

{\section*{Appendix E -- Critical values of the new tests}\label{CRVNT}

In this section, the empirical 95th percentiles of the distributions of $\sqrt{n} J_{n, a}$ and $|\sqrt{n}R_{n, a}|$, under $H_0$, are presented. Different shape parameters $\lambda$ of the null distribution are used. The values are computed using Monte Carlo simulations { with} $N=100 000$ repetitions. In Tables \ref{critvalL.5} and \ref{critval5}, in  the column $\infty$, 95th percentiles of the asymptotic half-normal $HN(0, \sigma^2_R(a))$ distribution are presented.}
% Please add the following required packages to your document preamble:
% \usepackage{booktabs}
\begin{table}[htbp]
\centering
\caption{Critical values of $\sqrt{n}J_{n, a}$ statistic, for N=100000 repetitions and $\lambda=0.5$.}
\label{critvalJ05}
\begin{tabular}{@{}lllllllll@{}}
\toprule
n & $\sqrt{n}J_{n, 1}$ & $\sqrt{n}J_{n, 1}$ & $\sqrt{n}J_{n,2}$ & $\sqrt{n}J_{n, 2}$ & $\sqrt{n}J_{n,5}$ & $\sqrt{n}J_{n, 5}$ & $\sqrt{n}J_{n,10}$ & $\sqrt{n}J_{n, 10}$ \\
 & MLE & MED & MLE & MED & MLE & MED & MLE & MED \\ \midrule
20 & 0.14827 & 0.14335 & 0.02585 & 0.02465 & 0.00212 & 0.00215 & 0.00029 & 0.00029 \\
40 & 0.14112 & 0.13843 & 0.02511 & 0.02483 & 0.00208 & 0.00209 & 0.00028 & 0.00029 \\
60 & 0.13798 & 0.13679 & 0.02482 & 0.02450 & 0.00211 & 0.00209 & 0.00029 & 0.00029 \\
80 & 0.14025 & 0.14030 & 0.02565 & 0.02504 & 0.00211 & 0.00210 & 0.00029 & 0.00029 \\
100 & 0.13997 & 0.13954 & 0.02477 & 0.02467 & 0.00215 & 0.00214 & 0.00028 & 0.00029 \\
120 & 0.13885 & 0.13779 & 0.02482 & 0.02499 & 0.00209 & 0.00208 & 0.00029 & 0.00029 \\
140 & 0.13847 & 0.13828 & 0.02467 & 0.02455 & 0.00206 & 0.00205 & 0.00028 & 0.00029 \\
160 & 0.13808 & 0.13750 & 0.02518 & 0.02483 & 0.00210 & 0.00208 & 0.00029 & 0.00029 \\
180 & 0.14067 & 0.13904 & 0.02469 & 0.02473 & 0.00208 & 0.00210 & 0.00029 & 0.00029 \\
200 & 0.14079 & 0.13972 & 0.02519 & 0.02505 & 0.00207 & 0.00206 & 0.00028 & 0.00028 \\
220 & 0.13827 & 0.13856 & 0.02465 & 0.02444 & 0.00207 & 0.00207 & 0.00029 & 0.00029 \\
240 & 0.14066 & 0.13747 & 0.02515 & 0.02493 & 0.00208 & 0.00207 & 0.00029 & 0.00029 \\
260 & 0.13857 & 0.13848 & 0.02484 & 0.02472 & 0.00211 & 0.00211 & 0.00028 & 0.00028 \\
280 & 0.14032 & 0.14023 & 0.02509 & 0.02500 & 0.00208 & 0.00208 & 0.00029 & 0.00028 \\
300 & 0.13732 & 0.13691 & 0.02509 & 0.02475 & 0.00210 & 0.00209 & 0.00029 & 0.00029 \\
320 & 0.13790 & 0.13780 & 0.02470 & 0.02470 & 0.00208 & 0.00207 & 0.00028 & 0.00029 \\
340 & 0.13694 & 0.13777 & 0.02470 & 0.02478 & 0.00208 & 0.00208 & 0.00028 & 0.00028 \\
360 & 0.14034 & 0.13795 & 0.02491 & 0.02491 & 0.00208 & 0.00207 & 0.00028 & 0.00028 \\
380 & 0.14178 & 0.14103 & 0.02515 & 0.02497 & 0.00210 & 0.00210 & 0.00028 & 0.00028 \\
400 & 0.13703 & 0.13746 & 0.02496 & 0.02479 & 0.00209 & 0.00209 & 0.00029 & 0.00029 \\
420 & 0.13731 & 0.13813 & 0.02505 & 0.02495 & 0.00209 & 0.00209 & 0.00029 & 0.00028 \\
440 & 0.13866 & 0.13864 & 0.02472 & 0.02486 & 0.00204 & 0.00205 & 0.00029 & 0.00029 \\
460 & 0.13859 & 0.13753 & 0.02500 & 0.02494 & 0.00210 & 0.00211 & 0.00028 & 0.00028 \\
480 & 0.13863 & 0.13743 & 0.02471 & 0.02455 & 0.00207 & 0.00207 & 0.00029 & 0.00029 \\
500 & 0.13778 & 0.13719 & 0.02471 & 0.02461 & 0.00211 & 0.00210 & 0.00029 & 0.00029 \\ \bottomrule
\end{tabular}
\end{table}
% Please add the following required packages to your document preamble:
% \usepackage{booktabs}
\begin{table}[htbp]
\centering
\caption{Critical values of $\sqrt{n}J_{n, a}$ statistic, for N=100000 repetitions and $\lambda=5$.}
\label{critvalJ5}
\begin{tabular}{@{}lllllllll@{}}
\toprule
 n & $\sqrt{n}J_{n, 1}$ & $\sqrt{n}J_{n, 1}$ & $\sqrt{n}J_{n, 2}$ & $\sqrt{n}J_{n, 2}$ & $\sqrt{n}J_{n, 5}$ & $\sqrt{n}J_{n, 5}$ & $\sqrt{n}J_{n, 10}$ & $\sqrt{n}J_{n, 10}$ \\
& MLE & MED & MLE & MED & MLE & MED & MLE & MED \\ \midrule
20 & 0.14721 & 0.14587 & 0.02576 & 0.02487 & 0.00211 & 0.00211 & 0.00028 & 0.00028 \\
40 & 0.14223 & 0.13780 & 0.02557 & 0.02467 & 0.00213 & 0.00214 & 0.00028 & 0.00028 \\
60 & 0.14306 & 0.14023 & 0.02499 & 0.02454 & 0.00208 & 0.00207 & 0.00028 & 0.00028 \\
80 & 0.14053 & 0.13881 & 0.02519 & 0.02472 & 0.00209 & 0.00210 & 0.00029 & 0.00029 \\
100 & 0.14162 & 0.13724 & 0.02508 & 0.02492 & 0.00208 & 0.00208 & 0.00029 & 0.00029 \\
120 & 0.13977 & 0.13679 & 0.02506 & 0.02467 & 0.00207 & 0.00208 & 0.00029 & 0.00029 \\
140 & 0.13999 & 0.13811 & 0.02505 & 0.02498 & 0.00208 & 0.00207 & 0.00028 & 0.00029 \\
160 & 0.13895 & 0.13837 & 0.02482 & 0.02485 & 0.00208 & 0.00208 & 0.00029 & 0.00029 \\
180 & 0.13903 & 0.13740 & 0.02518 & 0.02474 & 0.00208 & 0.00211 & 0.00029 & 0.00028 \\
200 & 0.14000 & 0.13887 & 0.02497 & 0.02479 & 0.00208 & 0.00207 & 0.00029 & 0.00029 \\
220 & 0.13798 & 0.13963 & 0.02483 & 0.02455 & 0.00211 & 0.00213 & 0.00029 & 0.00029 \\
240 & 0.13681 & 0.13660 & 0.02480 & 0.02461 & 0.00207 & 0.00206 & 0.00029 & 0.00029 \\
260 & 0.13797 & 0.13741 & 0.02503 & 0.02470 & 0.00208 & 0.00208 & 0.00029 & 0.00029 \\
280 & 0.14082 & 0.13991 & 0.02479 & 0.02484 & 0.00206 & 0.00208 & 0.00029 & 0.00029 \\
300 & 0.13751 & 0.13719 & 0.02484 & 0.02457 & 0.00206 & 0.00206 & 0.00029 & 0.00029 \\
320 & 0.14183 & 0.13939 & 0.02544 & 0.02521 & 0.00208 & 0.00208 & 0.00028 & 0.00028 \\
340 & 0.14109 & 0.13875 & 0.02514 & 0.02509 & 0.00209 & 0.00208 & 0.00028 & 0.00028 \\
360 & 0.13746 & 0.13723 & 0.02497 & 0.02493 & 0.00209 & 0.00210 & 0.00029 & 0.00029 \\
380 & 0.14061 & 0.13883 & 0.02480 & 0.02477 & 0.00207 & 0.00206 & 0.00028 & 0.00028 \\
400 & 0.13714 & 0.13641 & 0.02415 & 0.02409 & 0.00212 & 0.00211 & 0.00029 & 0.00029 \\
420 & 0.13798 & 0.13886 & 0.02485 & 0.02459 & 0.00209 & 0.00209 & 0.00029 & 0.00029 \\
440 & 0.14116 & 0.14154 & 0.02471 & 0.02454 & 0.00210 & 0.00210 & 0.00029 & 0.00029 \\
460 & 0.14085 & 0.14010 & 0.02464 & 0.02468 & 0.00209 & 0.00209 & 0.00028 & 0.00028 \\
480 & 0.13887 & 0.13834 & 0.02513 & 0.02505 & 0.00208 & 0.00208 & 0.00029 & 0.00029 \\
500 & 0.13690 & 0.13597 & 0.02492 & 0.02454 & 0.00208 & 0.00209 & 0.00029 & 0.00029 \\ \bottomrule
\end{tabular}
\end{table}

 \begin{landscape}
% Please add the following required packages to your document preamble:
% \usepackage{booktabs}
\begin{table}[htbp]
\caption{Critical values of $|\sqrt{n}R_{n, a}|$ statistic, for N=100000 repetitions and $\lambda=0.5$.}
\label{critvalL.5}
\begin{tabular}{@{}lllllllllll@{}}\toprule
n & $|\sqrt{n}R_{n, 0.2}|$ & $|\sqrt{n}R_{n, 0.2}|$ & $|\sqrt{n}R_{n, 0.5}|$ & $|\sqrt{n}R_{n, 0.5}|$ & $|\sqrt{n}R_{n, 1}|$ & $|\sqrt{n}R_{n, 1}|$ & $|\sqrt{n}R_{n, 2}|$ & $|\sqrt{n}R_{n, 2}|$ & $|\sqrt{n}R_{n, 5}|$ & $|\sqrt{n}R_{n, 5}|$ \\
 & MLE & MED & MLE & MED & MLE & MED & MLE & MED & MLE & MED \\ \midrule
20 & 4.17008 & 5.90912 & 1.06865 & 1.11926 & 0.29687 & 0.27948 & 0.07024 & 0.06797 & 0.00862 & 0.00868 \\
40 & 4.19955 & 5.10804 & 1.03597 & 1.04216 & 0.28856 & 0.28013 & 0.06871 & 0.06794 & 0.00864 & 0.00865 \\
60 & 4.19686 & 4.73081 & 1.02705 & 1.02376 & 0.28602 & 0.28032 & 0.06834 & 0.06749 & 0.00865 & 0.00864 \\
80 & 4.18362 & 4.56416 & 1.03060 & 1.01520 & 0.28599 & 0.28035 & 0.06806 & 0.06746 & 0.00858 & 0.00862 \\
100 & 4.22461 & 4.45201 & 1.02478 & 1.01877 & 0.28447 & 0.28131 & 0.06825 & 0.06803 & 0.00863 & 0.00867 \\
120 & 4.18853 & 4.35032 & 1.01301 & 1.01625 & 0.28244 & 0.28059 & 0.06798 & 0.06776 & 0.00861 & 0.00861 \\
140 & 4.20833 & 4.33770 & 1.01447 & 1.01765 & 0.28174 & 0.28178 & 0.06754 & 0.06776 & 0.00855 & 0.00864 \\
160 & 4.19136 & 4.27661 & 1.02122 & 1.01196 & 0.28465 & 0.28091 & 0.06825 & 0.06772 & 0.00864 & 0.00861 \\
180 & 4.22155 & 4.27821 & 1.01443 & 1.01192 & 0.28290 & 0.28069 & 0.06798 & 0.06763 & 0.00864 & 0.00860 \\
200 & 4.19311 & 4.26874 & 1.01287 & 1.01263 & 0.28083 & 0.28204 & 0.06762 & 0.06805 & 0.00860 & 0.00859 \\
220 & 4.19972 & 4.25523 & 1.02219 & 1.01124 & 0.28281 & 0.28047 & 0.06781 & 0.06723 & 0.00860 & 0.00855 \\
240 & 4.20073 & 4.22512 & 1.01462 & 1.01582 & 0.28289 & 0.28071 & 0.06786 & 0.06759 & 0.00862 & 0.00860 \\
260 & 4.19012 & 4.24080 & 1.02344 & 1.00977 & 0.28399 & 0.28062 & 0.06824 & 0.06742 & 0.00861 & 0.00859 \\
280 & 4.21490 & 4.26364 & 1.01403 & 1.01294 & 0.28354 & 0.28093 & 0.06804 & 0.06750 & 0.00866 & 0.00857 \\
300 & 4.21452 & 4.21718 & 1.01313 & 1.01232 & 0.28188 & 0.28109 & 0.06790 & 0.06751 & 0.00860 & 0.00856 \\
320 & 4.19603 & 4.23134 & 1.01834 & 1.01366 & 0.28433 & 0.28214 & 0.06826 & 0.06802 & 0.00865 & 0.00860 \\
340 & 4.20765 & 4.23655 & 1.01740 & 1.01748 & 0.28352 & 0.28254 & 0.06788 & 0.06781 & 0.00863 & 0.00857 \\
360 & 4.20222 & 4.23949 & 1.01342 & 1.01063 & 0.28188 & 0.28022 & 0.06779 & 0.06734 & 0.00861 & 0.00859 \\
380 & 4.20538 & 4.25830 & 1.01943 & 1.00732 & 0.28328 & 0.28085 & 0.06798 & 0.06775 & 0.00864 & 0.00857 \\
400 & 4.20693 & 4.21437 & 1.01814 & 1.01072 & 0.28272 & 0.28159 & 0.06769 & 0.06774 & 0.00862 & 0.00859 \\
420 & 4.19724 & 4.20401 & 1.01645 & 1.01841 & 0.28367 & 0.28272 & 0.06801 & 0.06755 & 0.00857 & 0.00860 \\
440 & 4.16590 & 4.18322 & 1.01320 & 1.01440 & 0.28278 & 0.28204 & 0.06769 & 0.06770 & 0.00858 & 0.00861 \\
460 & 4.17603 & 4.19560 & 1.01877 & 1.00791 & 0.28326 & 0.28072 & 0.06782 & 0.06781 & 0.00858 & 0.00866 \\
480 & 4.18318 & 4.20832 & 1.00953 & 1.01192 & 0.28094 & 0.28232 & 0.06743 & 0.06808 & 0.00857 & 0.00863 \\
500 & 4.20214 & 4.20821 & 1.01907 & 1.01903 & 0.28273 & 0.28304 & 0.06776 & 0.06784 & 0.00859 & 0.00860\\
$\infty$ & 4.19818 & 4.19818 & 1.01314 & 1.01314 & 0.28191 & 0.28191 &0.06774&0.06774&0.00860 & 0.00860\\\midrule
\end{tabular}
\end{table}
\end{landscape}

\begin{landscape}
    % Please add the following required packages to your document preamble:
% \usepackage{booktabs}
\begin{table}[htbp]
\caption{Critical values of $|\sqrt{n}R_{n, a}|$ statistic, for N=100000 repetitions and $\lambda=5$.}
\label{critval5}
\begin{tabular}{@{}lllllllllll@{}}
\toprule
n & $|\sqrt{n}R_{n, 0.2}|$ & $|\sqrt{n}R_{n, 0.2}|$ & $|\sqrt{n}R_{n, 0.5}|$ & $|\sqrt{n}R_{n, 0.5}|$ & $|\sqrt{n}R_{n, 1}|$ & $|\sqrt{n}R_{n, 1}|$ & $|\sqrt{n}R_{n, 2}|$ & $|\sqrt{n}R_{n, 2}|$ & $|\sqrt{n}R_{n, 5}|$ & $|\sqrt{n}R_{n, 5}|$ \\
 & MLE & MED & MLE & MED & MLE & MED & MLE & MED & MLE & MED \\ \midrule
20&4.15862&5.83832&1.06027&1.11188&0.29661&0.27925&0.07042&0.06805&0.00866&0.00873\\
40&4.18722&5.03037&1.04100&1.04123&0.28822&0.27920&0.06883& 0.06771 & 0.00863 & 0.00865 \\
60 & 4.19557 & 4.68703 & 1.03461 & 1.03082 & 0.28754 & 0.28193 & 0.06893 & 0.06796 & 0.00864 & 0.00866 \\
80 & 4.20230 & 4.51665 & 1.02995 & 1.01911 & 0.28610 & 0.28061 & 0.06839 & 0.06773 & 0.00864 & 0.00865 \\
100 & 4.19353 & 4.43083 & 1.02103 & 1.02076 & 0.28421 & 0.28053 & 0.06810 & 0.06766 & 0.00865 & 0.00866 \\
120 & 4.20346 & 4.35525 & 1.02439 & 1.01705 & 0.28459 & 0.28094 & 0.06815 & 0.06774 & 0.00859 & 0.00863 \\
140 & 4.17824 & 4.31838 & 1.01741 & 1.01490 & 0.28483 & 0.28138 & 0.06808 & 0.06778 & 0.00861 & 0.00862 \\
160 & 4.22484 & 4.31109 & 1.02510 & 1.01972 & 0.28409 & 0.28297 & 0.06825 & 0.06798 & 0.00856 & 0.00856 \\
180 & 4.19214 & 4.25517 & 1.01172 & 1.00914 & 0.28201 & 0.27947 & 0.06775 & 0.06744 & 0.00860 & 0.00858 \\
200 & 4.21680 & 4.28196 & 1.02439 & 1.01950 & 0.28520 & 0.28248 & 0.06827 & 0.06797 & 0.00860 & 0.00860 \\
220 & 4.20422 & 4.26522 & 1.02456 & 1.01674 & 0.28444 & 0.28273 & 0.06814 & 0.06795 & 0.00859 & 0.00861 \\
240 & 4.20907 & 4.25199 & 1.02258 & 1.01575 & 0.28307 & 0.28180 & 0.06802 & 0.06759 & 0.00861 & 0.00861 \\
260 & 4.22791 & 4.24382 & 1.01933 & 1.01740 & 0.28356 & 0.28193 & 0.06816 & 0.06807 & 0.00863 & 0.00864 \\
280 & 4.19666 & 4.20102 & 1.01446 & 1.01433 & 0.28403 & 0.28201 & 0.06813 & 0.06787 & 0.00864 & 0.00864 \\
300 & 4.20267 & 4.22986 & 1.01919 & 1.01230 & 0.28236 & 0.28172 & 0.06796 & 0.06773 & 0.00860 & 0.00860 \\
320 & 4.17988 & 4.21130 & 1.01789 & 1.01045 & 0.28301 & 0.28149 & 0.06763 & 0.06745 & 0.00859 & 0.00860 \\
340 & 4.18848 & 4.19939 & 1.01672 & 1.01201 & 0.28376 & 0.28254 & 0.06790 & 0.06776 & 0.00863 & 0.00863 \\
360 & 4.18101 & 4.21324 & 1.01395 & 1.01285 & 0.28265 & 0.28261 & 0.06794 & 0.06786 & 0.00860 & 0.00862 \\
380 & 4.21101 & 4.24101 & 1.02158 & 1.01630 & 0.28400 & 0.28305 & 0.06799 & 0.06799 & 0.00861 & 0.00860 \\
400 & 4.20429 & 4.21522 & 1.02185 & 1.01579 & 0.28337 & 0.28247 & 0.06788 & 0.06772 & 0.00862 & 0.00861 \\
420 & 4.21409 & 4.22878 & 1.01258 & 1.01327 & 0.28118 & 0.28027 & 0.06743 & 0.06736 & 0.00860 & 0.00860 \\
440 & 4.19203 & 4.19435 & 1.01509 & 1.01606 & 0.28344 & 0.28199 & 0.06772 & 0.06769 & 0.00857 & 0.00858 \\
460 & 4.19923 & 4.22726 & 1.01400 & 1.01221 & 0.28236 & 0.28124 & 0.06810 & 0.06802 & 0.00864 & 0.00865 \\
480 & 4.20922 & 4.22241 & 1.01868 & 1.01932 & 0.28370 & 0.28290 & 0.06804 & 0.06784 & 0.00858 & 0.00859 \\
500 & 4.20593 & 4.22146 & 1.01916 & 1.01679 & 0.28350 & 0.28182 & 0.06787 & 0.06770 & 0.00862 & 0.00861\\
$\infty$ & 4.19818 & 4.19818 & 1.01314 & 1.01314 & 0.28191 & 0.28191 &0.06774&0.06774&0.00860 & 0.00860\\\midrule
\end{tabular}
\end{table}
\end{landscape}

\end{document}